\newcommand{\be}{\begin{equation}}
\newcommand{\ee}{\end{equation}}
\newcommand{\tr}{\mathop{\rm tr}\nolimits}
\def\CS{{\mathcal S}}
\def\CT{{\mathcal T}}
\def\CE{{\mathcal E}}
\def\CO{{\mathcal O}}
\newcommand{\scp}[2]{\langle #1, #2 \rangle}
\theoremstyle{plain}
\newtheorem{thm}{\bfseries Theorem}
\newtheorem{propn}{\bfseries Proposition}
\newtheorem{lem}{\bfseries Lemma}
\newtheorem{cor}{\bfseries Corollary}
\theoremstyle{remark}
\newtheorem{rem}{\sc Remark} 
\newtheorem{ex}{\sc Example}
\begin{document}
\begin{center}
{\bf\large\sc 
Tensor space representations of Temperley--Lieb algebra \\[1mm]
 and generalized permutation matrices
} 

\vspace*{4mm}
{ Andrei Bytsko ${}^{1,2}$  }
\vspace*{3mm}

{\small
\noindent
${}^{1}$ Department of Mathematics, University of Geneva, 
C.P. 64, 1211 Gen\`eve 4, Switzerland \\ 
${}^{2}$ Steklov Institute of Mathematics,
Russian Academy of Sciences,
Fontanka 27, 191023, St.~Petersburg, Russia}
 
\end{center}
\vspace*{2mm}

\begin{abstract}
  
Orthogonal projections in ${\mathbb C}^n \otimes {\mathbb C}^n$ 
of rank one and rank two that give rise to unitary tensor space 
representations of the Temperley--Lieb algebra $TL_N(Q)$ are considered. 
In the rank one case, a complete classification of 
solutions is given. In the rank two case, solutions with $Q$
varying in the ranges $[2n/3,\infty)$ and $[n/\sqrt{2},\infty)$ are
constructed for $n=3k$ and $n=4k$, $k \in {\mathbb N}$, respectively.

\end{abstract}

\section{Introduction}
\subsection{Formulation of the problem and outline of results}

Below, we denote by $M_n$ the ring of $n\,{\times}\,n$ complex matrices,
by $I_n$ the $n\,{\times}\,n$ identity matrix, and by $\otimes$ the 
Kronecker product. $\bar{X}$, $X^t$, and $X^*$ stand, respectively, for 
the complex conjugate, the transpose, and the conjugate transpose 
of $X \in M_n$.

In the present article, we will continue the study begun 
in \cite{By1} of a particular class of representations of the 
Temperley--Lieb algebra $TL_N(Q)$. Recall that a unitary 
representation of $TL_N(Q)$ on the tensor product space
$\bigl({{\mathbb C}^n}\bigr)^{\otimes N}$ is determined by a matrix 
$T \in M_{n^2}$ satisfying the following relations:
\begin{align*} 
{}& (\mathrm{T1})     &&
  T^* = T ,&&  \\
{}&  (\mathrm{T2})  && 
	 T \, T = Q \, T   , && \\  
{}& (\mathrm{T3})     &&   
 T_{12} \, T_{23} \, T_{12} \, = T_{12}  \,,&&  \\
\label{tl2}
{}& (\mathrm{T4})   &&
 T_{23} \, T_{12} \, T_{23} \, = T_{23}  \,, &&    
\end{align*}
where 
$T_{12} \equiv T \,{\otimes}\, I_n$ and 
$T_{23} \equiv I_n \,{\otimes}\, T$. 
Without a loss of generality, we will always assume that $Q>0$.
Apart from $n$ and $Q$, an important parameter of a representation
is the rank $r=\mathrm{rank}(T)$.

\vspace*{1mm}{\small
\begin{ex}\label{ex1}
For $r=1$, the most known solution to (T1)--(T4) is given by 
\begin{equation}\label{xxz1}
 T = 
 \begin{pmatrix}
  0 & 0 & 0 & 0 \\
  0 & q & \zeta & 0 \\
  0 & \zeta^{-1} & q^{-1} & 0 \\
  0 & 0 & 0 & 0
 \end{pmatrix} 
 ,  \qquad\qquad q>0\,, \quad |\zeta|=1 \,.  
\end{equation}
The corresponding value of $Q$ in (T2) is $Q=q+q^{-1}$.
\end{ex}
}

The goal of the present article is to consider solutions
to  (T1)--(T4) in the cases $r=1$ and $r=2$. In the latter case,
our principal aim is to construct {\em varying $Q$ solutions} $T(q)$,
where $q$ is a parameter (or a set of parameters) and $Q = Q(q)$ 
is a non--constant function of~$q$ (like in Example~\ref{ex1}).  
It should be remarked here that, while rank one solutions
to (T1)--(T4) (and their non--Hermitian counterparts) are 
ubiquitous in the literature, the author is aware of only two
examples in the higher rank case --- see \cite{Kul3} and 
\cite{WX}, where two constructions are given for 
$r=n \geq 2$ but in both cases only for $Q=\sqrt{n}$.

The paper is organized as follows. In Section~\ref{RFOP},
we reformulate the original problem as a problem of constructing
a set of $r$ matrices $V_i$ satisfying an orthonormality condition
and such that the partitioned matrix $W_\CT$ built from them
is almost unitary.  In Section~\ref{RORO}, we give a complete
classification of rank one solutions by showing that every
suitable matrix $V \in M_n$ is unitarily congruent to a generalized
permutation matrix $DP_{\sigma}$, where $D$ is a non--singular
diagonal matrix and $\sigma$ is an
arbitrary involution which has at most one fixed point. 
In Section~\ref{ROT}, we focus on the rank two case,
where we have to find a suitable pair $V_1, V_2 \in M_n$.
In Section~\ref{PR1}, we establish some properties of  $V_1, V_2$. 
In Section~\ref{SN3}, we construct varying $Q$ solutions for 
$n=3p$,\ $p \in \mathbb N$ with $Q \in [2n/3,\infty)$.  
In particular, it is shown that every unitary matrix from $U(2p)$ 
gives rise to a solution to (T1)--(T4).
In Section~\ref{NEGP} and Section~\ref{SN4}, we consider 
the case when a solution is given by (or unitarily congruent to) 
a pair of generalized permutation matrices, i.e. 
$V_1 = D_1  P_{\sigma_1}$, $V_2 = D_2  P_{\sigma_2}$.
In Section~\ref{NEGP}, we establish some necessary conditions
for the pair $\sigma_1, \sigma_2$ and find all suitable pairs for $n=4$
and some for $n>4$. All these cases yield solutions
with $Q=n/\sqrt{2}$ and some of them admit varying~$Q$ solutions.
In Sections~\ref{SN4}, we construct varying $Q$
solutions for $n= 4l$,\ $l \in \mathbb N$ with $Q \in [n/\sqrt{2},\infty)$.
At the end of the section, we briefly discuss the extension
of constructed solutions to the non--Hermitian case 
corresponding to non--unitary representations of $TL_N(Q)$.
The proofs of all statements are given in the Appendix.
 
 \subsection{Reformulation of the problem}\label{RFOP}
 
 Let $\scp{\,}{}$ denote the standard inner product 
on~${\mathbb C}^n$ and let ${\CE}=\{e_a\}_{a=1}^n$ be 
a  basis of ${\mathbb C}^n$ orthonormal w.r.t. $\scp{\,}{}$.
Given a vector $v \in {\mathbb C}^n \otimes {\mathbb C}^n$,
we will write $v \sim V \in M_{n}$ if $V$ is  
the matrix of its coefficients, i.e.
$v  = \sum_{a,b=1}^n V_{ab}\, e_a \otimes e_b$.
Similarly, given an $r$--dimensional subspace $\CT \subset 
 {{\mathbb C}^n} \,{\otimes}\, {{\mathbb C}^n}$, we will write 
 $\CT \sim \{V_1,\ldots,V_r\}$
if the  orthonormal set of vectors, $v_1 \sim V_1, \ldots, v_r \sim V_r$,
is a spanning set of~$\CT$.
The corresponding orthogonal projection $P_\CT$ is represented 
by the following matrix: 
\begin{equation}\label{PiTau}
  P_\CT = 
 \sum_{s=1}^r \sum_{a,b,c,d=1}^n (V_s)_{ab} \, 
 (\bar{V}_s)_{cd} \  E_{ac} \otimes E_{bd} \,,
\end{equation}
where $E_{ab} \in {M}_{n}$ is such that
$\bigl(E_{ab}\bigr)_{ij} = \delta_{ai} \delta_{bj}$.

Every solution to (T1)--(T4) has the form $T=Q P_{\CT}$,
where $P_{\CT}$ is given by~(\ref{PiTau}). If $P_\CT$ has 
rank $r$, we will say, somewhat abusing the terminology,
that the corresponding representation is of rank~$r$. 

\vspace*{1mm}{\small
\begin{ex}
For $T$ given by (\ref{xxz1}), we have $T=(q+q^{-1})P_\CT$,
where $\CT \sim \{V\}$ and
\begin{equation}\label{vTq}
 V = \frac{1}{\sqrt{q^2+1}} 
 \biggl( \begin{matrix}
  0 & \zeta\, q \\
  1  & 0 
 \end{matrix} \biggr) \,, \qquad q>0 \,, \quad |\zeta|=1\,.
\end{equation} 
\end{ex}
}
\vspace*{1mm}

Given a subspace $\CT \sim \{V_1,\ldots,V_r\}$ of
${{\mathbb C}^n} \,{\otimes}\, {{\mathbb C}^n}$, 
 we associate to it the following partitioned matrix~$W_\CT \in M_{rn}$:
\begin{align}\label{WVdef} 
  W_{\CT}    = \sum_{s,m=1}^r  E_{sm} \otimes V_m \bar{V}_s \,.
\end{align}
Then we have the following criterion.

\begin{thm}[\cite{By1}, Theorem~2]\label{PROPTLW}
 $T=Q P_\CT \in M_{n^2}$, where $Q>0$ and $\CT \sim \{V_1,\ldots,V_r\}$,
 is a solution to $\mathrm{(T1)}$--$\mathrm{(T4)}$ if and only if 
$Q W_\CT$ is a unitary matrix.
\end{thm}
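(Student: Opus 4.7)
The plan is to reduce the four relations to a single unitarity condition, separating what is automatic from what is substantive. Since $P_\CT$ is an orthogonal projection, $P_\CT^* = P_\CT$ and $P_\CT^2 = P_\CT$; consequently $T = Q P_\CT$ with $Q > 0$ satisfies (T1) and (T2) automatically. The whole content of the theorem thus lies in (T3) and (T4).

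For (T3) I would exploit that $P_{12}$ is the orthogonal projection onto $\CT \otimes {\mathbb C}^n$, which has orthonormal basis $\{v_s \otimes e_c\}_{s,c}$. Since both sides of (T3) are killed by $I - P_{12}$ on the right, the relation is equivalent to
\[
P_{12} P_{23}(v_s \otimes e_c) \;=\; \tfrac{1}{Q^2}\, v_s \otimes e_c \qquad \text{for all } s, c .
\]
I would then expand $v_s = \sum_{ab} (V_s)_{ab}\, e_a \otimes e_b$ and use $\scp{v_m}{e_b \otimes e_c} = (\bar V_m)_{bc}$ at each application of $P_\CT$. After applying first $P_{23}$ and then $P_{12}$, the coefficient of $v_k \otimes e_q$ in the result is an entry of $\sum_m V_m^* V_s^t \bar V_k V_m$, so (T3) becomes the matrix identity
\[
\sum_m V_m^* V_s^t \bar V_k V_m \;=\; \tfrac{1}{Q^2}\,\delta_{ks}\, I_n .
\]
Transposing both sides and using $(V^*)^t = \bar V$ converts the left-hand side into $\sum_m V_m^t V_k^* V_s \bar V_m$, which from \rf{WVdef} is precisely the $(k,s)$-block of $W_\CT^* W_\CT$. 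Hence (T3) is equivalent to $Q^2 W_\CT^* W_\CT = I_{rn}$.

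For (T4) I would run the same argument with the roles of $P_{12}$ and $P_{23}$ exchanged, starting from the orthonormal basis $\{e_c \otimes v_s\}$ of $\mathrm{range}(P_{23}) = {\mathbb C}^n \otimes \CT$. The analogous bookkeeping produces the complementary identity $Q^2 W_\CT W_\CT^* = I_{rn}$. Combining, (T3) and (T4) together are equivalent to the unitarity of $QW_\CT$, which is the claim.

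The principal obstacle is bookkeeping rather than concept: one must keep the complex conjugates, transposes, and conjugate transposes in their correct positions throughout a two-step computation in which each intermediate vector is re-expanded in elementary tensors, and then identify the resulting quartic sum in the $V_s$ with a block of $W_\CT^* W_\CT$ or $W_\CT W_\CT^*$ only after applying an overall transpose. Once those identifications are fixed, no further branching is required.
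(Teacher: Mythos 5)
Your proof is correct. There is nothing in this paper to compare it with: the statement is imported from \cite{By1} (Theorem~2 there) and its proof is not reproduced here, but your direct verification is the natural argument. I checked the key identifications: since $T=QP_\CT$ with $Q>0$, (T1)--(T2) are indeed automatic; restricting (T3) to the range of $P_\CT\otimes I_n$ and expanding in the orthonormal vectors $v_k\otimes e_q$ gives exactly $\sum_m V_m^* V_s^t \bar V_k V_m = Q^{-2}\delta_{ks}I_n$, whose transpose $\sum_m V_m^t V_k^* V_s \bar V_m$ is the $(k,s)$ block of $W_\CT^* W_\CT$ under the convention \rf{WVdef} (the $(s,m)$ block of $W_\CT$ being $V_m\bar V_s$); the mirrored computation for (T4) yields $\sum_m V_m \bar V_k V_s^t V_m^* = Q^{-2}\delta_{ks}I_n$, i.e.\ the $(k,s)$ block of $W_\CT W_\CT^*$, consistent with the rank--two specialization \rf{veq1}--\rf{veq3}. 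One small remark: since $W_\CT$ is a square matrix, either of $Q^2W_\CT^*W_\CT=I_{rn}$ or $Q^2W_\CT W_\CT^*=I_{rn}$ alone is already equivalent to unitarity of $QW_\CT$, so (T3) and (T4) are each individually equivalent to the conclusion; combining the two, as you do, is of course also valid.
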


Thus, constructing a solution $T$ to (T1)--(T4) of 
a rank $r$ is equivalent to finding $r$ matrices, 
$V_1,\ldots,V_r$, such that the corresponding
vectors are orthonormal, i.e.
\begin{equation}\label{vv}
  \tr \bigl( V_s^*  V_m \bigr)  = \delta_{sm}  
\end{equation}
and the corresponding matrix $W_\CT$ is a scalar
multiple of  a unitary matrix.

It is natural to consider solutions $T$ and $T'$ as equivalent
if the corresponding sets $V_1,\ldots,V_r$ and $V'_1,\ldots,V'_r$
are related by  simultaneous {\em unitary congruence}: 
\begin{equation}\label{VVkg} 
 V'_k = g \, V_k \, g^t , \qquad  k=1, \ldots,r\,,  \qquad g \in U(n) \,,
\end{equation}  
because, as seen from (\ref{PiTau}), such $T$ and $T'$
are related as follows
\begin{equation}\label{Tkg0}
 T' =  (g \otimes g) \, T \, (g^* \otimes g^*) \,.
\end{equation}

In this context, it is useful to recall the 
following criterion of unitary congruence:
\begin{lem}[\cite{HoHo}, Theorem~2.4]\label{ABUC}  
Non--singular matrices $A, B \in M_n$ are unitarily congruent 
if and only if there exists a unitary matrix $g \in U(n)$ such that
\begin{equation}\label{abg}
A\,A^* = g\, B\,B^* g^* , \qquad
A\,\bar{A} = g\, B\,\bar{B}\, g^* .
\end{equation}
\end{lem}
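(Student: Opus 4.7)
\textbf{Necessity} is a one-line calculation. If $A = gBg^t$ with $g \in U(n)$, then from $g^t\bar g = (g^*g)^t = I$ one obtains, by direct substitution, $AA^* = gBg^t\bar g B^*g^* = gBB^*g^*$ and $A\bar A = gBg^t\bar g\bar B g^* = gB\bar Bg^*$.

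For \textbf{sufficiency} I would proceed in two stages. First I would reduce to the case $g = I$: replace $A$ by $\tilde A := g^*A\bar g$, which is itself a unitary congruence of $A$ since $\bar g = (g^*)^t$, so it is enough to show $\tilde A$ is unitarily congruent to $B$. Using $\bar g g^t = I$, a short calculation gives $\tilde A\tilde A^* = g^*AA^*g = BB^*$ and $\tilde A\overline{\tilde A} = g^*A\bar Ag = B\bar B$. Thus it suffices to prove: \emph{if $AA^* = BB^*$ and $A\bar A = B\bar B$, then some $U \in U(n)$ satisfies $A = UBU^t$}.

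Second, I would exploit the left polar decomposition $A = RV_A$, $B = RV_B$ with $R := \sqrt{AA^*} = \sqrt{BB^*}$ positive definite (non-singular, since $A, B$ are) and $V_A, V_B \in U(n)$. Since $\bar R = R^t$, the identity $A\bar A = B\bar B$ collapses, after cancelling $R$ on the left, to the single equation $V_A R^t \bar V_A = V_B R^t \bar V_B$, equivalently a symmetry--type relation between the unitary $V_B V_A^{-1}$ and $R$. Passing to an orthonormal eigenbasis of $R$ decouples this relation: on each eigenspace of $R$ it reduces to a complex-symmetric unitarity condition, which admits a Takagi factorisation, while across eigenspaces of differing eigenvalues the relation pins down the off-diagonal blocks. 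Assembling the Takagi factors yields the required $U$.

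The \textbf{main obstacle} will be this last step: converting the block-structured constraint into an explicit $U$ with $A = UBU^t$ requires careful bookkeeping through the Takagi factorisations on each eigenspace and the cross terms between distinct eigenvalues of $R$. A cleaner alternative, which I would pursue if the direct assembly becomes unwieldy, is to invoke Youla's canonical form for unitary congruence: one checks directly that $\psi_A := A^{-*}\bar A = (AA^*)^{-1}A\bar A$ transforms as $\psi_A \mapsto g\psi_A g^*$ under $A \mapsto gAg^t$, so the two hypotheses force $\psi_A = g\psi_B g^*$. Hence $\psi_A$ and $\psi_B$ are unitarily similar, they share the same Youla invariants, and $A, B$ must be unitarily congruent.
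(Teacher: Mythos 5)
A preliminary remark: the paper offers no proof of this lemma at all --- it is imported verbatim from Hong and Horn (\cite{HoHo}, Theorem~2.4) --- so your attempt can only be judged on its own terms, not against an internal argument.

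Your necessity computation is correct, and so is the reduction of sufficiency to the case $g=I$: the matrix $\tilde A = g^*A\bar g$ is indeed unitarily congruent to $A$ (via $g^*$), and your identities $\tilde A\tilde A^*=BB^*$, $\tilde A\overline{\tilde A}=B\bar B$ check out. The genuine gap is that the core implication --- if $AA^*=BB^*$ and $A\bar A=B\bar B$ for nonsingular $A,B$, then $A=UBU^t$ for some unitary $U$ --- is never actually established. Your first route (polar decomposition $A=RV_A$, $B=RV_B$ and the correctly derived relation $V_AR^t\bar V_A=V_BR^t\bar V_B$) is a plausible program, but the decoupling over the eigenspaces of $R$ and the reassembly of the Takagi factors, which you yourself flag as the main obstacle, is exactly where all the content of the theorem lies; note moreover that $R^t=\bar R$ has eigenspaces that are the complex conjugates of those of $R$, so the antilinear bookkeeping between conjugate eigenspaces is not a routine afterthought. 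As written, this is an outline, not a proof.

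The proposed fallback does not repair this; it is invalid. From $\psi_A=g\psi_Bg^*$ you retain only that $\psi_A=(AA^*)^{-1}A\bar A$ and $\psi_B$ are unitarily similar, and then assert that coincidence of ``Youla invariants'' forces unitary congruence. That inference is false as a general principle: take $A=\mathrm{diag}(1,2)$ and $B=I_2$; then $\psi_A=\psi_B=I_2$, yet $A$ and $B$ are not unitarily congruent, since unitary congruence preserves singular values. Passing from the simultaneous conjugation of the pair $(AA^*,\,A\bar A)$ by one and the same unitary $g$ to the unitary similarity class of the single product $\psi_A$ discards precisely the information that makes the statement true --- indeed, the assertion that this pair, up to \emph{simultaneous} unitary similarity, is a complete invariant for unitary congruence of nonsingular matrices is essentially the theorem you are trying to prove --- so an argument that keeps only $\psi_A$ cannot close the loop.
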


\section{Representations of rank one}\label{RORO}

Here we consider solutions to (T1)--(T4) such that
$r \equiv \mathrm{rank} (T)=1$.

Let $V \in M_n$ satisfy the normalization condition
\begin{equation}\label{trvv} 
\tr \bigl( V^* V \bigr) =1 \,.
\end{equation} 
By Theorem~\ref{PROPTLW},  $T=Q P_\CT$,
where $Q>0$ and $\CT \sim \{V\}$, is a solution to (T1)--(T4)
if and only if  
\begin{equation}\label{VV}     
   Q \,  V \bar{V} \in U(n) \,,
\end{equation}  
or, equivalently, 
\begin{equation}\label{VV22}     
    V \bar{V} V^t V^* = Q^{-2} I_n \,.
\end{equation}   

\begin{rem}\label{REun}
For every $V$ satisfying (\ref{trvv}) and (\ref{VV}), we have
$Q \geq n$ (see Proposition~3 in \cite{By1}). The lower bound,
 $Q=n$, is achieved only if $V$ itself is {\em almost unitary},
 i.e. it is a scalar multiple of a unitary matrix. 
\end{rem}

\begin{rem}
In the rank one case, solving equations (T2)--(T4) without imposing 
the hermiticity condition (T1) amounts to solving the following 
counterpart of equation (\ref{VV22}):
$VUV^tU^t =Q^{-2} I_n$, where $V$ and $U$ are non--singular matrices 
such that $\tr(VU^t) =1$.  A scheme of construction of suitable 
pairs $V, U$ was outlined in~\cite{Gur}. Particular solutions, $U=Q^{-1}V^{-1}$ and $U=Q^{-1}(V^t)^{-1}$, were considered 
in \cite{Kul} and~\cite{WX}, respectively. 
Note that, in the latter case, the only possible value of $Q$ 
is $Q=n$. This solution is a counterpart of the almost unitary 
solution to (\ref{VV}) mentioned in Remark~\ref{REun}.
\end{rem}
 
Let us introduce some notations. $\CS_n$ will stand for
the symmetric group of degree~$n$. If we need to write
down the explicit form of a permutation $\sigma \in \CS_n$,
we will give its decomposition into cycles.
Given an element $\sigma$ of $\CS_n$, we will denote by
$P_{\sigma} \in M_n$  the corresponding permutation matrix, i.e.
$(P_\sigma)_{ij} = \delta_{i,\sigma(j)}$.
Matrix $P_\sigma^t$ corresponds to $\sigma^{-1}$.
If $\sigma$ is an involution, i.e. $\sigma^{-1}=\sigma$,
then $P_\sigma$ is a symmetric matrix.
Given a diagonal matrix $D \in M_n$, we will denote
by $D^\sigma \equiv P_\sigma D P_\sigma^t$ the matrix 
obtained from $D$ by the action of the permutation $\sigma$ 
on its diagonal entries, i.e.  for the matrix entries we have:
$\bigl(D^\sigma)_{k,k}=D_{\sigma^{-1}(k),\sigma^{-1}(k)}$.

\vspace*{0.5mm}
The most general form of a solution to equations  
(\ref{trvv})--(\ref{VV}) is the following.

\begin{thm}\label{VV0} 
Let $\sigma \in \CS_n$ be an involution which has at most
one fixed point and $P_\sigma$ be the corresponding 
permutation matrix.
For every $V \in M_n$ satisfying (\ref{trvv}) and (\ref{VV}), 
there exists $g \in U(n)$ such that
$V' = g \, V \, g^t$ has the following form:
\begin{equation}\label{VDP} 
   V' =  D \, P_\sigma \,, 
\end{equation} 
with $D={\rm diag}(z_1,\ldots,z_n)$, where
$z_k \in {\mathbb C}{\setminus}\{0\}$ satisfy the following relations:
\begin{equation}\label{VDP3} 
   \sum_{k=1}^n |z_k|^2 = 1 \,, \qquad\qquad
  z_k \, z_{\sigma(k)}  = Q^{-1}  \quad \forall k \,.
\end{equation}
\end{thm}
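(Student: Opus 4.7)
The plan is to invoke Lemma~\ref{ABUC} after reducing $V$ to canonical block form via a sequence of unitary congruences. For $V'=DP_\sigma$ with $\sigma$ an involution, both $V'(V')^*=|D|^2$ and $V'\overline{V'}=D\bar D^\sigma$ are diagonal matrices, so the lemma reduces the theorem to producing a single unitary $g$ that simultaneously diagonalizes $VV^*$ and $V\bar V$, followed by a matching of the resulting eigenvalues with $\{|z_k|^2\}$ and $\{z_k\bar z_{\sigma(k)}\}$ for a suitable choice of $\sigma$ and $\{z_k\}$.

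The crucial first step is commutativity. Starting from \rf{VV22} and its normality counterpart $V^tV^*V\bar V=Q^{-2}I$, one isolates $Q^{-2}(V^t)^{-1}$ from each to obtain the identity $V^*V\bar V=\bar V\,VV^*$. Left-multiplying by $V$ then gives $[VV^*,V\bar V]=0$. Since $VV^*$ is Hermitian and $V\bar V$ is normal (being $Q^{-1}$ times a unitary), they admit simultaneous unitary diagonalization, so after the corresponding unitary congruence one may assume $VV^*=\mathrm{diag}(s_k)$ and $V\bar V=Q^{-1}\Lambda$ with $\Lambda=\mathrm{diag}(e^{i\phi_k})$. Combining $V\bar V=Q^{-1}\Lambda$ with its conjugate yields $\Lambda V\Lambda=V$, so $V_{ij}$ vanishes unless $\phi_i+\phi_j\equiv0\pmod{2\pi}$; thus $V$ splits into sub-blocks $V^{(\alpha)}$ mapping the phase block $S_\alpha=\{k:\phi_k=\alpha\}$ to $S_{-\alpha}$.

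The block-by-block reduction proceeds as follows. For the pair $(S_\alpha,S_{-\alpha})$ with $\alpha\notin\{0,\pi\}$, the constraint $V^{(\alpha)}\overline{V^{(-\alpha)}}=Q^{-1}e^{-i\alpha}I$ combined with the SVD of $V^{(-\alpha)}$ and a block-diagonal unitary congruence brings the pair to a direct sum of $2\times2$ anti-diagonal blocks, after which a diagonal phase congruence normalizes the product to $z_1 z_2=Q^{-1}$. For the self-paired blocks $\alpha\in\{0,\pi\}$, the row-norm factorization $V^{(\alpha)}=D_0 R$ with $R$ unitary turns the coninvolution identity $V^{(\alpha)}\overline{V^{(\alpha)}}=\pm Q^{-1}I$ into the entry-wise relation $R_{ji}=\pm Q\sqrt{s_is_j}\,R_{ij}$; this pairs singular values by $s_is_j=Q^{-2}$, and on the self-paired stratum $s=Q^{-1}$ forces $R$ to be symmetric (for $\alpha=0$) or antisymmetric (for $\alpha=\pi$) unitary. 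Applying Autonne--Takagi (respectively its antisymmetric analogue) and phase corrections reduces each such block to copies of $[Q^{-1/2}]$ (only for $\alpha=0$) together with further $2\times2$ anti-diagonal blocks. Pairs of residual $1\times1$ blocks $[Q^{-1/2}]$ can then be fused into $2\times2$ anti-diagonal blocks via congruence by $g=\tfrac{1}{\sqrt2}\bigl(\begin{smallmatrix}1&i\\i&1\end{smallmatrix}\bigr)$ followed by a diagonal phase fix, leaving at most one fixed point and yielding $V=DP_\sigma$. The normalization $\sum|z_k|^2=1$ is preserved since it equals $\tr(V^*V)$, giving~\rf{VDP3}.

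The main obstacle is the treatment of the self-paired blocks $\alpha\in\{0,\pi\}$: combining the row-norm polar decomposition with the coninvolution structure to reveal the symmetric/antisymmetric unitary nature of $R$, and then invoking the appropriate Autonne--Takagi-type factorization to complete the reduction to the claimed canonical form.
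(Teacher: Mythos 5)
Your proposal is correct in substance but follows a genuinely different route from the paper's. For the degenerate (non-simple spectrum) case the paper simply quotes the canonical form of nonsingular congruence-normal matrices under unitary congruence (Lemma~\ref{AbA}, from \cite{HLV}, \cite{HoSe}): $A=Q^{1/2}V$ is congruence normal, hence congruent to a direct sum of $1\times1$ and antidiagonal $2\times2$ blocks, unitarity of $A\bar A$ pins down the parameters, pairs of $1\times1$ blocks are then fused exactly as you do (cf.~\rf{PPg12} and Lemma~\ref{ABUC}), and a final permutation congruence produces the prescribed involution $\sigma$. You instead re-derive the needed canonical form from scratch: the commutator identity $[VV^*,V\bar V]=0$, simultaneous diagonalization of the commuting normal matrices $VV^*$ and $V\bar V$, the support condition $\Lambda V\Lambda=V$, and the SVD/Autonne--Takagi reductions (with the antisymmetric Takagi--Hua analogue for the $\alpha=\pi$ stratum) amount to extending the paper's ``simple spectrum'' warm-up argument to the general case by adjoining $VV^*$ as a second commuting observable. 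This buys a self-contained proof that avoids the external structure theorem, at the cost of a longer block-by-block analysis; the paper's proof is shorter but leans on the cited canonical-form results.

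Two points need tightening, though neither is fatal. First, the derivation of the key identity is misstated: \rf{VV22} and its ``normality counterpart'' $V^tV^*V\bar V=Q^{-2}I_n$ are equivalent (for square matrices $AB=I$ iff $BA=I$), and isolating $Q^{-2}(V^t)^{-1}$ from either yields the same expression $V^*V\bar V$, not two different ones. To obtain $\bar V VV^*=Q^{-2}(V^t)^{-1}$ you must additionally use the complex conjugate of \rf{VV22}, i.e.\ the unitarity of $Q\bar V V=\overline{QV\bar V}$; with that input the identity $V^*V\bar V=\bar V VV^*$, and hence the commutation after left multiplication by $V$, does follow. Second, inside your $S_0$ and $S_\pi$ you only treat the self-paired stratum $s=Q^{-1}$ explicitly; the strata pairs $(s,Q^{-2}/s)$ with $s\neq Q^{-1}$ still need the same SVD-type reduction as the $\alpha\notin\{0,\pi\}$ case (the relation $R_{ji}=\pm Q\sqrt{s_is_j}\,R_{ij}$ makes the two off-diagonal unitary blocks transposes of each other up to sign, so this is routine but should be said), and at the very end one must conjugate by a permutation matrix, as the paper does, to pass from the particular involution produced by your block ordering to the arbitrary prescribed $\sigma$ of the statement. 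With these small repairs your argument is complete and equivalent in outcome to the paper's.
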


In other words, any solution to (\ref{trvv})--(\ref{VV})
is unitarily congruent to a {\em generalized permutation matrix} 
$D \, P_\sigma$, where
 $P_\sigma$ is an a priori chosen permutation matrix such that
\begin{equation}\label{Pperm} 
   P_\sigma^t=P_\sigma \,, 
   \qquad\qquad \tr P_\sigma = n \!\!\pmod 2 \,,
\end{equation} 
and the diagonal matrix $D$ satisfies the 
following relations:
\begin{equation}\label{VDP2} 
   \tr D \bar{D} = 1 \,, \qquad
 D^{-1} = Q \, D^{\sigma} .
\end{equation} 

The proof of Theorem~\ref{VV0} is given in the Appendix.
Here we remark only that the proof simplifies if the spectrum of 
$W_\CT \equiv V \bar{V}$ is assumed to be simple. 
In the general case, the proof is based on the results of
\cite{HLV} and \cite{HoSe} on normal
forms of {\em congruence normal} matrices.

Theorem~\ref{VV0} along with equations (\ref{VVkg}) and (\ref{Tkg0})
allows us to describe all solutions to (T1)--(T4) in the 
rank one case as follows.

\begin{cor}\label{Tgg}
Let $\{e_a\}_{a=1}^n$ be the canonical
orthonormal basis of ${\mathbb C}^n$.
For every permutation $\sigma \in \CS_n$ which is an involution 
and has at most one fixed point and for every $T \in {M}_{n^2}$ which
has rank one and satisfies relations (T1)--(T4) with $Q>0$,  there exists
 a unitary matrix $g \in U(n)$ such that 
\begin{equation}\label{Tg0}
 T = Q \, (g \otimes g) \, (v \otimes v^*) \, 
 	(g^* \otimes g^*) \,,
\end{equation}
where $ v =  
  \sum_{k=1}^{n} 
    z_k \, e_k \otimes e_{\sigma(k)} $
with $z_k \in {\mathbb C}\setminus\{0\}$ satisfying relations~(\ref{VDP3}).
\end{cor}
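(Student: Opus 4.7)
The plan is to reduce everything to Theorem~\ref{VV0} by translating between the matrix picture (where $V$ lives) and the vector picture (where the rank--one range of $T$ lives). Since $T$ has rank one and satisfies (T1)--(T4), I would first write $T = Q P_\CT$ with $\CT \sim \{V\}$ for some $V \in M_n$ obeying (\ref{trvv}) and (\ref{VV}). Setting $v = \sum_{a,b} V_{ab}\, e_a \otimes e_b$, formula~(\ref{PiTau}) collapses for $r=1$ to $P_\CT = v \otimes v^*$, so that $T = Q\, v \otimes v^*$.

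Next, Theorem~\ref{VV0} applied to the prescribed involution $\sigma$ produces an $h \in U(n)$ and nonzero complex numbers $z_1,\dots,z_n$ satisfying (\ref{VDP3}) such that $h\, V\, h^t = D\, P_\sigma$ with $D = \mathrm{diag}(z_1,\dots,z_n)$. A short calculation shows that if $v \sim V$ then $(h \otimes h)\, v \sim h\, V\, h^t$, so $v' := (h \otimes h) v$ corresponds to the matrix $D P_\sigma$. Since $(D P_\sigma)_{ab} = z_a \delta_{a,\sigma(b)}$, expanding the sum and relabeling $b \mapsto \sigma(k)$ (using $\sigma^2 = \mathrm{id}$) yields
\[
 v' \,=\, \sum_{k=1}^n z_k \, e_k \otimes e_{\sigma(k)},
\]
which is exactly the vector $v$ appearing in the statement of the corollary.

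Finally, setting $g := h^*$ and substituting $v = (g \otimes g) v'$, $v^* = v'^* (g^* \otimes g^*)$ into $T = Q\, v \otimes v^*$ yields the desired identity~(\ref{Tg0}); this last step is simply equation~(\ref{Tkg0}) applied in rank one. I expect no serious obstacle beyond the content of Theorem~\ref{VV0} itself. The only subtlety to watch is verifying that unitary congruence $V \mapsto h V h^t$ on matrices corresponds to $v \mapsto (h \otimes h)v$ on the associated vectors (rather than, say, $(h \otimes \bar{h})v$), but this is already implicit in the transition from~(\ref{VVkg}) to~(\ref{Tkg0}).
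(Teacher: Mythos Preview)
Your proposal is correct and follows exactly the route the paper indicates: the corollary is stated as an immediate consequence of Theorem~\ref{VV0} together with equations~(\ref{VVkg}) and~(\ref{Tkg0}), and your argument simply spells out those steps. The only cosmetic issue is the reuse of the symbol $v$ both for the original vector $\sum V_{ab}\,e_a\otimes e_b$ and for the vector named $v$ in the corollary (your~$v'$), but you handle this clearly.
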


\begin{rem}\label{ZN}
Consider equations (\ref{VDP3}) for $\sigma=(1,n)(2,n-1)\ldots$.
Then, given $z_1,{\ldots},z_{\lfloor\frac n2 \rfloor}$, we can obtain  
the value of $Q$ and then find the remaining $z$'s
(up to the sign of $z_{\frac {n+1}2 }$ in the case when
$n$ is odd). Thus, a generic solution to (\ref{trvv})--(\ref{VV}) 
is determined by $\lfloor \frac n2 \rfloor$ complex parameters.  
A solution to  (\ref{trvv})--(\ref{VV}) which is unitarily
congruent to $V'$ of the form (\ref{VDP}), where
$P_\sigma$ does not satisfy one or both conditions~(\ref{Pperm}),
will be degenerate, that is, it will have fewer parameters.
\end{rem} 

\vspace*{1mm}{\small
\begin{ex}
For $n=2$, the group $\CS_2$ consists of two elements,
$\sigma=id$ and $\sigma=(12)$. $P_{(12)}$ fulfils  
conditions (\ref{Pperm}).  So, by Theorem~\ref{VV0}, the general 
solution to (\ref{trvv})--(\ref{VV}) is unitarily congruent to
\begin{equation}\label{VTq}  
  V'(u) = \frac{1}{\sqrt{Q}} \, 
  {\rm diag}(u,u^{-1})\, P_{(12)}
 = \frac{1}{\sqrt{Q}}
  \begin{pmatrix}
  0 &   u \\
   u^{-1} & 0
 \end{pmatrix} \,, \qquad
 Q= |u|^2 + |u|^{-2} \,.
\end{equation}
In accordance with Remark~\ref{ZN}, the general solution has 
one complex parameter, $u \in {\mathbb C}\setminus \{0\}$. 

Looking for a solution built using $P_{id}$ instead of $P_{(12)}$, 
we obtain a degenerate solution: 
\begin{equation}\label{VTq0}  
 V_0= 
 \frac{1}{\sqrt{Q_0}} \, {\rm diag}(u_0,u_0) P_{id}= 
 \frac{u_0}{\sqrt{Q_0}} \,
 \begin{pmatrix}
  1 & 0 \\ 0 & 1 
 \end{pmatrix} \,, \qquad u_0 = \pm 1 \,, \qquad Q_0=2.
\end{equation}
Since Theorem~\ref{VV0} states that every $n=2$ solution 
is unitarily congruent to (\ref{VTq}), $V_0$ must
be unitarily congruent to $V'(u_0)=u_0 P_{(12)}/\sqrt{Q_0}$. 
Indeed, Lemma~\ref{ABUC}
assures that $P_{(12)}$ and $P_{id}$ are unitarily congruent.
To establish this unitary congruence explicitly, one can verify 
the following equality:
\begin{equation}\label{PPg12}  
 g_0 \,
 \begin{pmatrix}
  0 &   1 \\
   1 & 0
 \end{pmatrix} 
 g_0^t = 
 \begin{pmatrix}
  1 &   0 \\
   0 & 1
 \end{pmatrix} \,, \qquad\quad
 g_0= \frac{e^{-i\pi/4}}{\sqrt{2}} 
 \begin{pmatrix}
  1 &   i \\
   i & 1
 \end{pmatrix} \,.
\end{equation}
Thus, $g_0 V'(u_0) \, g_0^t = V_0$.
\end{ex}
} 
\vspace*{1mm}
 
\begin{rem} 
$V_1, V_2 \in M_n$ that satisfy (\ref{trvv})--(\ref{VV}) 
for the same value of~$Q$ are not necessary unitarily congruent.
Indeed,  by Theorem~\ref{VV0}, they are unitarily congruent,
respectively, to $V_1'=D_1P_{\sigma}$ and $V_2'=D_2P_{\sigma}$, 
where $\sigma$ is an involution and $D_1, D_2$ satisfy~(\ref{VDP2}).
Lemma~\ref{ABUC} implies that the sets of singular values
of unitarily congruent matrices coincide. Thus, a necessary 
condition for $V_1'$ and $V_2'$ (and hence for $V_1$ and $V_2$) to 
be unitarily congruent to each other is that $D_1 \bar{D}_1$ and 
$D_2 \bar{D}_2$ coincide up to a permutation of their entries.
For $n \geq 4$, among solutions to (\ref{VDP2}) there are
pairs $D_1, D_2$ that do not satisfy this condition.
\end{rem}

\section{Representations of rank two}\label{ROT}

\subsection{Preliminary remarks}\label{PR1}

In the rest of article, we will consider solutions 
to (T1)--(T4) such that
$r \equiv \mathrm{rank} (T)=2$.
 
Let $V_1, V_2 \in M_n$ be such that 
\begin{equation}\label{norm2} 
 \tr \bigl( V_1^* V_1 \bigr) = 
 \tr \bigl( V_2^* V_2 \bigr) = 1 \,, \qquad
 \tr \bigl( V_1^* V_2 \bigr) =0 \,. 
\end{equation}   
Set
\begin{align}\label{WV2} 
  W_{\CT}  \equiv   \left( 
 \begin{matrix}
   V_1 \bar{V}_1 & V_2 \bar{V}_1   \\
     V_1 \bar{V}_2 & V_2 \bar{V}_2  \end{matrix} \right).
\end{align}
By Theorem~\ref{PROPTLW}, $T=Q P_\CT$, $\CT \sim \{V_1,V_2\}$,
is a solution to (T1)--(T4) iff $Q W_\CT$ is a unitary matrix, 
which is equivalent to the following set of equations:
\begin{align}
\label{veq1} 
{}&   V_1 \bar{V}_1 V_1^t V_1^* +
    V_2 \bar{V}_1 V_1^t V_2^* = Q^{-2} I_n \,,\\
\label{veq2} 
{}&   V_1 \bar{V}_2 V_2^t V_1^* +
    V_2 \bar{V}_2 V_2^t V_2^* = Q^{-2} I_n \,,\\
\label{veq3} 
{}&   V_1 \bar{V}_1 V_2^t V_1^* +
    V_2 \bar{V}_1 V_2^t V_2^* = 0 \,.
\end{align}
It is worth noting that, unlike the rank one case,
matrices $V_1$ and $V_2$ can be singular. 

\begin{propn}\label{VVdet} 
If $V_1, V_2 \in M_n$ satisfy (\ref{veq1})--(\ref{veq3}) for some $Q$,
then $|\det V_1| =|\det V_2|$. Furthermore, 
both $V_1$ and $V_2$ are singular if $n$ is odd. 
\end{propn}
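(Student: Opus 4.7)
The plan is to argue in two stages: (i) handle the case where both $V_i$ are invertible via a direct determinant computation on (\ref{veq3}); (ii) rule out the ``mixed'' case where one of them is singular but the other is not.

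For stage (i) I would rewrite (\ref{veq3}) in the factored form $V_1 M V_1^* = -V_2 M V_2^*$ with $M := \bar V_1 V_2^t$, which has $\det M = \overline{\det V_1}\,\det V_2$. Using $\det V_i^* = \overline{\det V_i}$, taking determinants on both sides and (when both $V_i$ are invertible) cancelling the common nonzero factor $\det M$ produces $|\det V_1|^2 = (-1)^n |\det V_2|^2$. For even $n$ this is the desired equality; for odd $n$ it is absurd, so in the odd case at least one of $V_1, V_2$ must be singular.

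For stage (ii), by the symmetry $V_1 \leftrightarrow V_2$ (block-swapping $W_\CT$ is a unitary conjugation, so it preserves unitarity of $Q W_\CT$, hence the system (\ref{veq1})--(\ref{veq3})), it suffices to assume $V_2$ is invertible and derive that $V_1$ is too. I would let $P$ be the orthogonal projection onto $\ker V_1^*$; from $\mathrm{Im}\,V_1 = (\ker V_1^*)^\perp$ we have $V_1^* P = 0$ and, taking adjoints, $P V_1 = 0$. Left-multiplying (\ref{veq3}) by $P$ annihilates the first term, and cancelling the invertible factor $V_2^t V_2^*$ leaves $P V_2 \bar V_1 = 0$. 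Conjugating (\ref{veq1}) by $P$ (apply $P\cdot\,\cdot\,P$) annihilates the first term via $V_1^* P = 0$, and the second term rearranges into $(P V_2 \bar V_1)(P V_2 \bar V_1)^* = Q^{-2} P$. These two conclusions force $Q^{-2} P = 0$, so $P = 0$ and $V_1$ is invertible, contradicting the singular hypothesis.

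Assembling the pieces, $(V_1, V_2)$ is either both invertible or both singular, so $|\det V_1| = |\det V_2|$ unconditionally; for odd $n$ the invertible alternative was eliminated in stage (i), so both are singular. The one-liner of stage (i) is painless; the real obstacle is stage (ii), where neither (\ref{veq1}) nor (\ref{veq3}) alone seems to yield leverage, and one has to notice that probing with the projection onto $\ker V_1^*$ simultaneously activates $P V_1 = 0$ in (\ref{veq3}) and $V_1^* P = 0$ in (\ref{veq1}).
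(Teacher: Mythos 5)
Your proof is correct, but it reaches the equality $|\det V_1|=|\det V_2|$ by a genuinely different route than the paper. The paper applies Jacobi's identity for complementary submatrices to the unitary matrix $A=QW_\CT$: with $A_{11}=QV_1\bar V_1$ and $(A^{-1})_{22}=(A^*)_{22}=QV_2^tV_2^*$, Jacobi gives $\det\bigl((A^*)_{22}\bigr)=\det A_{11}/\det A$, and $|\det A|=1$ then yields $|\det V_1|=|\det V_2|$ unconditionally, with no case distinction — the mixed singular/nonsingular situation never has to be addressed. Your stage (i), taking determinants in (\ref{veq3}) written as $V_1(\bar V_1V_2^t)V_1^*=-V_2(\bar V_1V_2^t)V_2^*$, is exactly the paper's argument for the odd-$n$ statement; but since you also use it for the equality of determinants, you must rule out the mixed case separately, which you do by the kernel-projection argument of stage (ii) (with $P$ the projection onto $\ker V_1^*$, so that $PV_1=0$ kills the first term of (\ref{veq3}) and $V_1^*P=0$ kills the first term of (\ref{veq1}), forcing $Q^{-2}P=0$) together with the swap symmetry $V_1\leftrightarrow V_2$ — legitimate, since (\ref{veq1})--(\ref{veq3}) are equivalent to unitarity of $QW_\CT$ and block-swapping is conjugation by a unitary permutation. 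What each approach buys: the paper's is shorter and needs no dichotomy, at the cost of invoking Jacobi's identity for the unitary $QW_\CT$; yours is more elementary and self-contained, and your stage (ii) establishes the dichotomy "both singular or both nonsingular" by a structural (determinant-free) argument, though once $|\det V_1|=|\det V_2|$ is known that dichotomy is of course automatic.
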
 

\begin{rem} 
For every pair $V_1, V_2 \in M_n$ satisfying (\ref{norm2}) and 
(\ref{veq1})--(\ref{veq3}), we have 
 \begin{align}\label{Q2n} 
  Q= \sqrt{2} \ \ \text{if} \ \ n=2\,, \qquad
  Q \geq \frac{n}{2} \ \ \text{if} \ \ n \geq 3\,,
\end{align}
see Theorem~3 and Corollary~3 in~\cite{By1}. 
Furthermore, $Q=n/\sqrt{2}$ if either $V_1$ or $V_2$ is
a scalar multiple of a unitary matrix, 
cf. Proposition~6 in~\cite{By1}.
\end{rem}

The condition that $Q W_\CT$ be unitary implies that each
block $QV_i\bar{V}_j$ is a contraction. If at least one of
the blocks is itself a scalar multiple of a unitary matrix, then
the estimate (\ref{Q2n}) sharpens as follows.

\begin{propn}\label{VVG2}
Let $V_1, V_2 \in M_n$ satisfy (\ref{veq1})--(\ref{veq3}) for some $Q>0$.
Suppose, in addition, that  $\alpha V_1 \bar{V}_1$ is unitary for some
 $\alpha>0$ and $ \tr \bigl( V_1^* V_1 \bigr) = 1$.
Then\\[1mm]
i) $\alpha V_2 \bar{V}_1$, 
 $\alpha V_1 \bar{V}_2$, and  $\alpha V_2 \bar{V}_2$ are unitary.\\[1mm]
ii) There exist  $g,g' \in U(n)$ such that $V_2= V_1 \, g$
and $V_2= g' \, V_1$. \\[1mm]
iii) $V_1, V_2$ satisfy (\ref{norm2}). \\[1mm]
iv) $\alpha = \sqrt{2} Q$ and
\begin{equation}\label{Qguni3}   
  Q \geq \frac{n}{\sqrt{2} } \,.
\end{equation} 
\end{propn}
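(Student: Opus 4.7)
The plan is to extract strong constraints on each of the four $n \times n$ blocks of $W_\CT$ by using both (\ref{veq1})--(\ref{veq3}) and their ``column'' analogues, which come from $W_\CT^* W_\CT = Q^{-2} I_{2n}$ and are automatic once $Q W_\CT$ is unitary. In outline: first I would show that all four blocks are scalar multiples of unitary matrices; then a determinant comparison forces the scalars and pins down $\alpha = \sqrt{2}\,Q$; part (ii) drops out of the structural identification; and part (iii) will be derived via an anticommutation trick extracted from (\ref{veq3}).

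For (i) and (iv), the hypothesis $\alpha V_1 \bar V_1 = U_0 \in U(n)$ gives $V_1 \bar V_1 V_1^t V_1^* = \alpha^{-2} I_n$. Substituting into (\ref{veq1}) yields $V_2 \bar V_1 V_1^t V_2^* = c\, I_n$ with $c = Q^{-2} - \alpha^{-2}$, so $V_2 \bar V_1 = \sqrt{c}\,U_1$ for some $U_1 \in U(n)$; the column analogue of (\ref{veq1}) gives symmetrically $V_1 \bar V_2 = \sqrt{c}\,U_2$. Plugging these into (\ref{veq3}) and using invertibility of $V_1 \bar V_1$, I solve for $V_2 \bar V_2 = -\alpha^{-1} U_2 U_0^* U_1$, again $\alpha^{-1}$ times a unitary. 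The decisive step is a determinant comparison: the unitarity of $\alpha V_1 \bar V_1$ and $\alpha V_2 \bar V_2$ forces $|\det V_1|^2 = |\det V_2|^2 = \alpha^{-n}$, whereas $|\det(V_2 \bar V_1)|^2$ equals both $|\det V_1|^2 |\det V_2|^2 = \alpha^{-2n}$ and $c^n$. This pins down $c = \alpha^{-2}$, giving $\alpha = \sqrt{2}\,Q$ and upgrading all four blocks to genuine $\alpha^{-1}$-multiples of unitaries. For the inequality $Q \geq n/\sqrt{2}$, I would derive $V_1^* V_1 = \alpha^{-2}\,\overline{(V_1 V_1^*)^{-1}}$ from $V_1^t V_1^* V_1 \bar V_1 = \alpha^{-2} I_n$, take traces to obtain $\tr((V_1 V_1^*)^{-1}) = \alpha^2$, and apply the elementary bound $\tr K \cdot \tr K^{-1} \geq n^2$ to $K = V_1 V_1^*$ (using $\tr K = 1$).

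Part (ii) is then immediate: $V_2 V_1^{-1} = (V_2 \bar V_1)(V_1 \bar V_1)^{-1} = U_1 U_0^* \in U(n)$ and $V_1^{-1} V_2 = (\bar V_1 V_1)^{-1}(\bar V_1 V_2) = U_0^t \bar U_2 \in U(n)$. In (iii), the equality $\tr(V_2^* V_2) = \tr(V_1^* V_1) = 1$ follows at once from $V_2 = g' V_1$ with $g' \in U(n)$. The orthogonality $\tr(V_1^* V_2) = 0$ is what I expect to be the main obstacle. My plan for it: substitute $V_2 = V_1 g$ into (\ref{veq3}) and cancel the outer invertible factors $V_1$ and $V_1^*$ to obtain $\bar V_1 g^t V_1^t + g\,(\bar V_1 g^t V_1^t)\,g^* = 0$; multiplying on the right by $g$ converts this into the anticommutation $gB + Bg = 0$ with $B := \bar V_1 g^t V_1^t$. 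Hence $gBg^{-1} = -B$, and invariance of trace under conjugation forces $\tr B = -\tr B$, i.e.\ $\tr B = 0$. Finally, using that $V_1^* V_1$ is Hermitian, one checks $\tr B = \tr(g^t V_1^t \bar V_1) = \tr(g^t\,\overline{V_1^* V_1}) = \tr(V_1^* V_1\,g) = \tr(V_1^* V_2)$, delivering the orthogonality and completing (iii).
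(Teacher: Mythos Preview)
Your argument is correct and runs close to the paper's proof, but with two genuine differences worth noting.

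First, to pin down $\alpha=\sqrt{2}\,Q$, the paper rewrites (\ref{veq1}) as $(V_2 V_1^{-1})(V_2 V_1^{-1})^* = (\alpha^2 Q^{-2}-1)I_n$ and then invokes Proposition~\ref{VVdet} (i.e.\ Jacobi's block--determinant identity) to conclude that the left side has determinant $1$. You bypass Proposition~\ref{VVdet} entirely: you use (\ref{veq3}) to show $\alpha V_2\bar V_2$ is already unitary, which gives $|\det V_2|^2=\alpha^{-n}$ directly, and the determinant comparison with $|\det(V_2\bar V_1)|^2=c^n$ then forces $c=\alpha^{-2}$. This is a clean alternative route. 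One small omission: you should remark that $c>0$; indeed $c=Q^{-2}-\alpha^{-2}\ge 0$ because the left side of $(V_2\bar V_1)(V_2\bar V_1)^*=cI_n$ is positive semidefinite, and $c=0$ would force $V_2=0$ (since $\bar V_1$ is invertible), contradicting~(\ref{veq2}).

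Second, for the bound $Q\ge n/\sqrt{2}$ the paper simply cites an external result (Proposition~7 of~\cite{By1}), whereas your argument via $\tr(V_1V_1^*)\cdot\tr((V_1V_1^*)^{-1})\ge n^2$ is self-contained.

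For part~(iii) the two arguments coincide in substance: after conjugating (\ref{veq3}) by $V_1^{-1}$ and $(V_1^*)^{-1}$ both you and the paper obtain $B+gBg^*=0$ with $B=\bar V_1 V_2^t$ (your $\bar V_1 g^t V_1^t$ is the same matrix since $V_2^t=g^tV_1^t$), and taking the trace gives $\tr(V_1^*V_2)=0$. Your anticommutation phrasing and the paper's direct trace are the same computation.
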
 

Examples of rank two solutions, where all the blocks of $Q W_\CT$
are almost unitary, will be given in Sections 3.3 and 3.4

\subsection{Solutions for $n= 3p$}\label{SN3}

Here we will construct rank two solutions in the
case when $n$ is a multiple of~$3$.
Consider the following ansatz:
\begin{equation}\label{n3part} 
  V_1= \begin{pmatrix}
  0 & F_{11} & 0 \\
  \bar{G}_{11} & 0 & \bar{G}_{12} \\
  0 & F_{21} & 0 
\end{pmatrix} 
 , \qquad 
 V_2= \begin{pmatrix}
  0 & F_{12} & 0 \\
 \bar{G}_{21} & 0 & \bar{G}_{22} \\
  0 & F_{22} & 0 
\end{pmatrix} ,
\end{equation}
where $F_{ij}, G_{ij} \in M_p$, $p \geq 1$.
Note that ${\det V_1 = \det V_2 =0}$, so that
the ansatz is consistent with Proposition~\ref{VVdet} for
all~$p$.

\begin{thm}\label{FGn3}
For $p \in {\mathbb N}$, let $\alpha_1, \alpha_2$ be some positive 
numbers such that 
\begin{equation}\label{betatr} 
\frac{1}{\alpha_1^2}  + \frac{1}{\alpha_2^2} = \frac{1}{p} \,.
\end{equation}
Suppose that $F_{ij}, G_{ij} \in M_p$ are such that the following 
partitioned matrices 
\begin{equation}\label{Hpart} 
  H_1=  \alpha_1 \, \begin{pmatrix}
  F_{11} & F_{12} \\ 
  F_{21} & F_{22}
\end{pmatrix} 
 , \qquad 
 H_2=  \alpha_2 \, \begin{pmatrix}
  G_{11} & G_{12} \\ 
  G_{21} & G_{22} 
\end{pmatrix} 
\end{equation}
are unitary. 
If $p>1$, suppose, in addition, that the equality
\begin{equation}\label{gamHH} 
  \zeta \, \bigl( G_{11} \, F_{12} + G_{12} \, F_{22}  \bigr) =
   G_{21} \, F_{11} + G_{22} \, F_{21}
\end{equation}
holds for some $\zeta \in {\mathbb C}$ such that $|\zeta|=1$.

Then $V_1, V_2 \in M_{3p}$ given by (\ref{n3part}) satisfy 
relations (\ref{norm2}) and (\ref{veq1})--(\ref{veq3}) with
\begin{equation}\label{QGH} 
 Q = \alpha_1 \, \alpha_2  \,,
\end{equation}
and, therefore, $T=Q P_\CT$, $\CT \sim\{V_1,V_2\}$ 
is a solution to (T1)--(T4).
\end{thm}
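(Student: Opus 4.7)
The key observation is that $V_1,V_2$ share a common ``checkerboard'' $3\times 3$ block pattern, with nonzero $p\times p$ blocks only at positions $(1,2),(2,1),(2,3),(3,2)$. A direct block multiplication shows that $V_s\bar V_m$ and $V_i^tV_j^*$ are each supported on the five positions $(1,1),(1,3),(3,1),(3,3),(2,2)$, so the same is true of every product $V_s\bar V_m V_i^t V_j^*$. Moreover, each of the four corner blocks factors as $F_{as}\bigl(G_{m1}G_{i1}^*+G_{m2}G_{i2}^*\bigr)F_{bj}^*$, while the central block equals $\bar X_{sm}\,X_{ji}^t$, where I set $X_{ab}:=G_{a1}F_{1b}+G_{a2}F_{2b}$ and assemble these $p\times p$ blocks into the $2p\times 2p$ matrix $\mathbf{X}=(X_{ab})$. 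Since $\mathbf{X}=Q^{-1}H_2H_1$ and both $H_1,H_2$ are unitary, $Q\mathbf{X}$ is unitary, so $\mathbf{X}\mathbf{X}^*=\mathbf{X}^*\mathbf{X}=Q^{-2}I_{2p}$; this, together with the unitarities of $H_1,H_2$ themselves, will be the algebraic engine of the proof.

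With this structure in hand, (\ref{norm2}) is immediate: $\tr(V_s^*V_m)$ splits into contributions from the $F$- and $G$-blocks, each being the trace of an appropriate $(s,m)$-block of $H_1^*H_1=I_{2p}$ or $H_2H_2^*=I_{2p}$, and (\ref{betatr}) combines the two contributions to give $\delta_{sm}$. For the corner blocks of (\ref{veq1})--(\ref{veq3}) the central factor $G_{m1}G_{i1}^*+G_{m2}G_{i2}^*$ equals $\alpha_2^{-2}\delta_{mi}I_p$ by the $(m,i)$-block of $H_2H_2^*=I$, which instantly kills all four corner blocks of (\ref{veq3}) and reduces the corners of (\ref{veq1}),(\ref{veq2}) to $\alpha_2^{-2}$ times sums of the form $F_{a1}F_{b1}^*+F_{a2}F_{b2}^*$ that are in turn controlled by the $(a,b)$-block of $H_1H_1^*=I$, producing $Q^{-2}\delta_{ab}I_p$ exactly as required.

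The delicate step, and the only place where (\ref{gamHH}) enters, is the central $(2,2)$ block. For (\ref{veq1}) the sum $\bar X_{11}X_{11}^t+\bar X_{21}X_{21}^t=\overline{X_{11}X_{11}^*+X_{21}X_{21}^*}$ is a \emph{column-wise} sum of $\mathbf X$ and is not directly supplied by $\mathbf X\mathbf X^*=Q^{-2}I$; the hypothesis (\ref{gamHH}), which reads exactly $X_{21}=\zeta X_{12}$ with $|\zeta|=1$, allows the substitution $X_{21}X_{21}^*=X_{12}X_{12}^*$ and converts the column sum into the row-sum $X_{11}X_{11}^*+X_{12}X_{12}^*=Q^{-2}I_p$ coming straight from unitarity; equation (\ref{veq2}) is handled the same way. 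For (\ref{veq3}) the central block $\overline{X_{11}X_{12}^*+X_{21}X_{22}^*}$ must vanish, and indeed multiplying the off-diagonal block identity $X_{11}X_{21}^*+X_{12}X_{22}^*=0$ of $\mathbf X\mathbf X^*$ through by $\zeta$ and using $\zeta X_{12}=X_{21}$ yields precisely $X_{11}X_{12}^*+X_{21}X_{22}^*=0$. The main obstacle is essentially bookkeeping: to see clearly that a single scalar condition of the form (\ref{gamHH}) is exactly what bridges the row-to-column gap in the unitarity of $\mathbf X$, with no further constraint needed on the $F_{ij},G_{ij}$; for $p=1$ the $X_{ab}$ are scalars and $|X_{12}|=|X_{21}|$ is forced automatically by unitarity of $Q\mathbf X$, which is consistent with the theorem's extra hypothesis being imposed only when $p>1$.
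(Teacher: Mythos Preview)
Your proof is correct and relies on the same structural separation as the paper's, but the packaging differs. The paper does not verify (\ref{veq1})--(\ref{veq3}) directly; instead it invokes Theorem~\ref{PROPTLW} and shows that $QW_\CT$ is unitary by permuting the block rows and columns of $W_\CT$ to bring it to block--diagonal form $\mathrm{diag}(W_1,\bar W_2)$, with $W_1\in M_{4p}$ and $W_2\in M_{2p}$. Here $\alpha_1\alpha_2 W_1$ is identified as the ``noncommutative Kronecker product'' of $H_1$ and $H_2$ (whose unitarity follows immediately from that of $H_1,H_2$), while $W_2$ is exactly the block--transpose of your matrix $\mathbf X$, and the hypothesis $X_{21}=\zeta X_{12}$ is used to rewrite $\alpha_1\alpha_2 W_2 = S\,H_2H_1\,\bar S$ with $S=\mathrm{diag}(\zeta^{1/2},\bar\zeta^{1/2})\otimes I_p$, making its unitarity manifest. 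Your argument is the same computation seen from the other side: the ``corner'' blocks of your products assemble into $W_1W_1^*$ and the ``central'' $(2,2)$ blocks into $\bar W_2 W_2^t$. The paper's route buys a cleaner conceptual picture (two independent unitary pieces of sizes $4p$ and $2p$), while yours makes transparent exactly why one unimodular scalar constraint (\ref{gamHH}) suffices --- it swaps a column sum of $\mathbf X\mathbf X^*$ for a row sum. Either way, the same algebra is doing the work.
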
 
 
\begin{rem}
Condition (\ref{betatr}) implies the following inequality
for $Q$ given by (\ref{QGH}):
\begin{equation}\label{QGH2} 
 Q \geq 2p = \frac{2}{3} n  \,.
\end{equation}
\end{rem}

\begin{propn}\label{FGex1} 
Let $p \in {\mathbb N}$ and let positive
$\alpha_1, \alpha_2$ satisfy (\ref{betatr}).\\
i) The hypotheses of Theorem~\ref{FGn3} are fulfilled if
\begin{equation}\label{FG0} 
 \alpha_1 \, F_{ij} = U_{ij} \,, \qquad  
 \alpha_2 \, G_{ij} = U^*_{ji} \,,  
\end{equation}
providing that the partitioned matrix 
$U = \begin{pmatrix}
  U_{11} & U_{12} \\ 
  U_{21} & U_{22} 
\end{pmatrix}$
is unitary.\\
ii) The hypotheses of Theorem~\ref{FGn3} are fulfilled if 
\begin{equation}\label{FG1} 
 F_{12} =  -w\, F_{22} \,, \quad 
  F_{21} = \bar{w}\,  F_{11}   \,, \quad
  G_{12} =  w\, G_{11} \,, \quad 
  G_{21} =  -\bar{w}\, G_{22} \,,  
  \qquad w \in {\mathbb C} \,,
\end{equation}
providing that $\beta_1\,F_{11}$, $\beta_1\, F_{22}$,
$\beta_2\,G_{11}$, $\beta_2\, G_{22}$
are unitary for $\beta_i= \alpha_i \sqrt{1+|w|^2}$. 
\end{propn}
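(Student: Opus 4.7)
The plan is to verify, in each case, the three hypotheses of Theorem~\ref{FGn3}: namely the unitarity of the two block matrices $H_1$ and $H_2$ from (\ref{Hpart}), and the compatibility condition (\ref{gamHH}). Both parts reduce to straightforward block-matrix calculations; the ansatz has been chosen so that all potential obstructions cancel identically.

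For part~(i), I would first observe the clean identifications $H_1 = U$ and $H_2 = U^*$ viewed as $2{\times}2$ block matrices of $p{\times}p$ blocks: indeed $\alpha_1 F_{ij} = U_{ij}$ gives $H_1 = U$, while $\alpha_2 G_{ij} = U^*_{ji}$ says precisely that the $(i,j)$-block of $H_2$ equals the $(j,i)$-block of $U^*$, i.e.\ $H_2 = (U^*)$ after the block transposition, which for a matrix of matrices equals $U^*$ itself. Hence $H_1, H_2$ are unitary whenever $U$ is. For (\ref{gamHH}), substituting the expressions for $F_{ij}$ and $G_{ij}$ turns the left-hand side into $(\zeta/\alpha_1\alpha_2)(U_{11}^*U_{12} + U_{21}^*U_{22})$ and the right-hand side into $(1/\alpha_1\alpha_2)(U_{12}^*U_{11} + U_{22}^*U_{21})$. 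The parenthesized sums are, respectively, the $(1,2)$ and $(2,1)$ blocks of $U^*U = I_{2p}$, hence both vanish, and (\ref{gamHH}) holds for any unimodular $\zeta$.

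For part~(ii), I would compute $H_1^*H_1$ block by block. The $(1,2)$ block is $F_{11}^*(-wF_{22}) + (\bar w F_{11})^*F_{22} = -wF_{11}^*F_{22} + wF_{11}^*F_{22} = 0$, and similarly for $(2,1)$: the off-diagonal blocks cancel purely by virtue of the sign choices in (\ref{FG1}), with no assumption on $F_{11}, F_{22}$. The $(1,1)$ block is $(1+|w|^2)F_{11}^*F_{11}$ and the $(2,2)$ block is $(1+|w|^2)F_{22}^*F_{22}$; using $\beta_1 F_{11}$ and $\beta_1 F_{22}$ unitary and $\beta_1^2 = \alpha_1^2(1+|w|^2)$, these both multiply out to $\alpha_1^{-2}I_p$, so $H_1^*H_1 = \alpha_1^{-2} I_{2p}$ and $H_1$ is unitary. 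For $H_2^*H_2$ the pattern is slightly different: the off-diagonal blocks are $w(G_{11}^*G_{11} - G_{22}^*G_{22})$ and its conjugate, which now vanish because the assumption that both $\beta_2 G_{11}$ and $\beta_2 G_{22}$ are unitary gives $G_{11}^*G_{11} = G_{22}^*G_{22} = \beta_2^{-2}I_p$; the diagonal blocks then equal $\alpha_2^2(1+|w|^2)\beta_2^{-2}I_p = I_p$.

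Finally, for the compatibility condition (\ref{gamHH}) under the ansatz (\ref{FG1}), the left-hand side becomes $\zeta\bigl(G_{11}(-wF_{22}) + (wG_{11})F_{22}\bigr) = 0$ and the right-hand side becomes $(-\bar w G_{22})F_{11} + G_{22}(\bar w F_{11}) = 0$, so the identity holds trivially for every $\zeta$ with $|\zeta|=1$. I do not anticipate any genuine obstacle: the whole argument is direct verification, and the only point requiring attention is keeping track of which scalar ($w$ versus $\bar w$) appears in which block so that the cancellations in the off-diagonal blocks of $H_1^*H_1$, $H_2^*H_2$, and in (\ref{gamHH}) line up correctly.
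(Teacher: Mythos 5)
Your verification is correct and takes essentially the same route as the paper's proof, which simply notes $H_1=U$, $H_2=U^*$, identifies the two sides of (\ref{gamHH}) with the $(12)$ and $(21)$ blocks of $U^*U$, and calls part (ii) a straightforward check. Two cosmetic slips do not affect the substance: in (i), $H_2$ is literally $U^*$ with no block transposition involved (since $\alpha_2 G_{ij}=(U_{ji})^*=(U^*)_{ij}$, as your subsequent block computation in fact uses), and in (ii) the relation $H_1^*H_1=\alpha_1^{-2}I_{2p}$ refers to the unscaled block matrix, whereas the paper's $H_1$ carries the factor $\alpha_1$ and so satisfies $H_1^*H_1=I_{2p}$.
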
 

{\small
\begin{ex} 
The pair 
\begin{equation}\label{soln3} 
  V_1= \begin{pmatrix}
  0 & z_1 & 0 \\
  z_2 & 0 & \bar{w} z_2 \\
  0 & \bar{w} z_1 & 0 
\end{pmatrix} 
\,, \qquad 
 V_2= \begin{pmatrix}
  0 & -\zeta_1 w z_1 & 0 \\
  -\zeta_2 w z_2 & 0 &  \zeta_2 z_2 \\
  0 &  \zeta_1  z_1 & 0 
\end{pmatrix}  ,
\end{equation}
where $z_1, z_2, w, \zeta_1, \zeta_2 \in {\mathbb C}$ and
\begin{equation}\label{3z00}
 |\zeta_1| =|\zeta_2|=1 \,, \qquad 
 |z_1|\,|z_2| \neq 0 \,, \qquad 
  (|z_1|^2 + |z_2|^2)(1 + |w|^2) =1 \,,
\end{equation}
satisfies (\ref{norm2}) and (\ref{veq1})--(\ref{veq3}) with 
\begin{equation}\label{Q3z00}  
  Q^{-1} =  |z_1|\,|z_2| \, (1 + |w|^2)  \,.
\end{equation}
In particular, setting  $w=0$, $\zeta_2=-\zeta_1=1$, and
$z_1=(q^4+1)^{- \frac{1}{2}}$,\   $z_2=q^2 z_1$, $q >0$, 
we recover Example~13 {}from \cite{By1} constructed as 
a TL pair for the quantum algebra $U_q(su_2)$.
\end{ex} 
}

\begin{rem} 
For $p=1$, condition (\ref{gamHH}) follows from the
hypothesis that $H_1, H_2$ are unitary (indeed, if  $A=H_2 H_1 \in U(2)$,
then  $|A_{12}|=|A_{21}|$ which is equivalent to~(\ref{gamHH})).
Therefore, taking two generic elements from $U(2)$ as $H_1, H_2$, 
we obtain the following solution.
\end{rem} 

{\small
\begin{ex}\label{V4z}
The pair
\begin{equation}\label{soln4} 
  V_1= \begin{pmatrix}
  0 & z_1 & 0 \\
  z_2 & 0 & z_3 \\
  0 & z_4 & 0 
\end{pmatrix} 
\,, \qquad 
 V_2= \begin{pmatrix}
  0 & - \zeta_1 \bar{z}_4 & 0 \\
  -\zeta_2 \bar{z}_3 & 0 & \zeta_2 \bar{z}_2 \\
  0 & \zeta_1 \bar{z}_1 & 0 
\end{pmatrix}  ,
\end{equation}
where $z_1, z_2, z_3, z_4, \zeta_1, \zeta_2 \in {\mathbb C}$ and 
\begin{equation}\label{4z0} 
 |\zeta_1| =|\zeta_2|=1 , \quad 
 |z_1| + |z_4| \neq 0 , \quad 
 |z_2| + |z_3| \neq 0 , \quad 
  |z_1|^2 + |z_2|^2 + |z_3|^2 + |z_4|^2 =1 ,
\end{equation}
satisfies (\ref{norm2}) and (\ref{veq1})--(\ref{veq3}) with
\begin{equation}\label{Q4z0} 
  Q^{-2}= (|z_1|^2 + |z_4|^2)(|z_2|^2 + |z_3|^2) \,.
\end{equation}
\end{ex}
}

Employing generalized permutation matrices, we will
construct a solution generalizing Example~\ref{V4z}
for which (\ref{gamHH}) holds non--trivially (that is,
unlike for the cases given in Proposition~\ref{FGex1}, 
the l.h.s. and the r.h.s. of (\ref{gamHH}) 
do not vanish identically). 

\begin{propn}\label{FGex2}
Given $p \in {\mathbb N}$ and $\sigma_1, \sigma_2 \in {\mathcal S}_p$,
let $P_{\sigma_1}, P_{\sigma_2}$ be the corresponding
permutation matrices and let $F_{ij}, G_{ij} \in M_p$ be given by
\begin{eqnarray}
& F_{11}= P_{\sigma_1} D_1 , \quad 
F_{12}= -P_{\sigma_1} \bar{D}_4 Z_1, \quad
 F_{21}= P_{\sigma_1} D_4 , \quad 
 F_{22}= P_{\sigma_1} \bar{D}_1 Z_1 , & \\
 & G_{11}= D_2 P_{\sigma_2} , \quad G_{12}= D_3 P_{\sigma_2}  , \quad
 G_{21}= - Z_2 \bar{D}_3 P_{\sigma_2} , 
 \quad G_{22}= Z_2  \bar{D}_2 P_{\sigma_2}  , &
\end{eqnarray}
where $D_i \in M_p$ are diagonal matrices and $Z_i \in U(p)$
are diagonal unitary matrices. Then the hypotheses of 
Theorem~\ref{FGn3} are satisfied providing that
\begin{equation}\label{Daa} 
  \alpha_1^2 (D_1 \bar{D}_1 + D_4 \bar{D}_4) = I_p =
  \alpha_2^2 (D_2 \bar{D}_2 + D_3 \bar{D}_3)  
\end{equation}
for some positive $\alpha_1, \alpha_2$ satisfying (\ref{betatr}) and
\begin{equation}\label{ZZss} 
  \zeta \, M \, Z_1^{\sigma_2 \circ \sigma_1} = 
  Z_2 \, \bar{M} \,,
\end{equation}
where 
$M \equiv  ( D_3 \bar{D}_1^{\sigma_2 \circ \sigma_1} -
  D_2 \bar{D}_4^{\sigma_2 \circ \sigma_1})$ and $|\zeta|=1$.
\end{propn}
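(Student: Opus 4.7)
The plan is to substitute the matrices $F_{ij}, G_{ij}$ prescribed in the proposition directly into the three hypotheses of Theorem~\ref{FGn3} --- unitarity of $H_1$, unitarity of $H_2$, and the identity (\ref{gamHH}) --- and to show that each reduces to one of the stated algebraic conditions (\ref{Daa}) and (\ref{ZZss}). The entire calculation rests on two simple identities: $P_\sigma^t P_\sigma = I_p$ and the slide rule $P_\sigma D = D^\sigma P_\sigma$ for any diagonal $D$; the remaining simplifications come from commutativity of diagonal matrices and from the fact that $Z_1, Z_2$ are diagonal unitaries.

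First I would handle unitarity of $H_1$. Factor $H_1 = \alpha_1 (I_2 \otimes P_{\sigma_1}) \widetilde H_1$, where $\widetilde H_1 = \bigl(\begin{smallmatrix} D_1 & -\bar D_4 Z_1 \\ D_4 & \bar D_1 Z_1 \end{smallmatrix}\bigr)$. Since $I_2 \otimes P_{\sigma_1}$ is unitary, $H_1$ is unitary iff $\alpha_1^2 \widetilde H_1^* \widetilde H_1 = I_{2p}$. A block-by-block computation kills the off-diagonal blocks (pairs of diagonals commute) and, using the fact that $Z_1$ is a diagonal unitary commuting with any other diagonal matrix, reduces both diagonal blocks to $\alpha_1^2 (D_1 \bar D_1 + D_4 \bar D_4)$, which equals $I_p$ by the first equality in (\ref{Daa}). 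Factoring $I_2 \otimes P_{\sigma_2}$ on the right of $H_2$ gives the analogous reduction of unitarity of $H_2$ to the second equality in (\ref{Daa}).

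Next I would simplify both sides of (\ref{gamHH}). Setting $\tau = \sigma_2 \circ \sigma_1$ so that $P_{\sigma_2} P_{\sigma_1} = P_\tau$, and using the slide rule to push every diagonal past $P_\tau$, a direct calculation yields
\begin{align*}
G_{11} F_{12} + G_{12} F_{22} &= (D_3 \bar D_1^\tau - D_2 \bar D_4^\tau)\, P_\tau\, Z_1 = M\, Z_1^\tau\, P_\tau , \\
G_{21} F_{11} + G_{22} F_{21} &= -Z_2 (\bar D_3 D_1^\tau - \bar D_2 D_4^\tau)\, P_\tau = -Z_2 \bar M\, P_\tau .
\end{align*}
Cancelling the invertible right factor $P_\tau$, (\ref{gamHH}) becomes $\zeta M Z_1^\tau = -Z_2 \bar M$, which is precisely (\ref{ZZss}) after absorbing the sign into the free unimodular constant $\zeta$. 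Theorem~\ref{FGn3} then yields the desired rank-two solution to (T1)--(T4) with $Q = \alpha_1 \alpha_2$.

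The main obstacle is purely bookkeeping: keeping track of whether sliding a diagonal past a permutation matrix conjugates by $\sigma$ or by $\sigma^{-1}$, and of the sign contributed by the definitions of $F_{12}$ and $G_{21}$. Once these moves are recorded once and applied consistently, no structural difficulty remains.
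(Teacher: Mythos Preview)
Your proposal is correct and follows essentially the same route as the paper's proof: verify unitarity of $H_1,H_2$ from (\ref{Daa}) and then compute both sides of (\ref{gamHH}) to reduce it to (\ref{ZZss}). Your factorisation $H_1=\alpha_1(I_2\otimes P_{\sigma_1})\widetilde H_1$ and the analogous one for $H_2$ merely make explicit what the paper dismisses as ``straightforward'', and your remark about absorbing the sign into the free unimodular $\zeta$ correctly handles a detail the paper leaves implicit.
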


\subsection{Generalized permutations matrices, 
solutions for $Q=n/\sqrt{2}$}\label{NEGP}

For $n$ even, we will look for solutions to 
(\ref{veq1})--(\ref{veq3}) that can be brought 
by a simultaneous unitary congruence (\ref{VVkg})
to a pair of generalized permutation matrices,
\begin{equation}\label{vvddpp}   
  V_1 = D_1  P_{\sigma_1} \,, \qquad 
   V_2 = D_2  P_{\sigma_2} \,,
\end{equation}
where $\sigma_1, \sigma_2 \in {\mathcal S}_n$ and
$D_1, D_2 \in M_n$ are non--singular diagonal matrices.
Since pairs related by (\ref{VVkg}) are regarded as
equivalent and all permutation matrices are unitary, 
we will search for pairs of the form (\ref{vvddpp})
up to the transformations
\begin{equation}\label{sstt}   
 \sigma_1 \to \tau \circ \sigma_1  \circ \tau^{-1} \,, \qquad
 \sigma_2 \to \tau \circ \sigma_2  \circ \tau^{-1} \,, \qquad  
 \tau \in \CS_n\,.
\end{equation}

For $V_1, V_2$ of the form (\ref{vvddpp}), 
conditions (\ref{norm2}) turn into
\begin{eqnarray}
\label{trdd12a}   
 &   \tr D_1 \bar{D}_1 = \tr D_2 \bar{D}_2 = 1 \,, & \\
\label{trdd12b} 
 &  \tr \bigl( D_1 \bar{D}_2 \, 
   P_{\sigma_1} P_{\sigma_2}^t \bigr) =0 \,, &
\end{eqnarray}
and equations (\ref{veq1})--(\ref{veq3}) are equivalent
to the following system:
\begin{align}
\label{Deq1} 
{}&   D_1 \bar{D}_1 D_1^{\sigma_1}  \bar{D}_1^{\sigma_1}  +
    D_2 \bar{D}_2 D_1^{\sigma_2} \bar{D}_1^{\sigma_2} 
    = Q^{-2} I_n \,,\\
\label{Deq2} 
{}&  D_1 \bar{D}_1 D_2^{\sigma_1}  \bar{D}_2^{\sigma_1} + 
    D_2 \bar{D}_2 D_2^{\sigma_2} \bar{D}_2^{\sigma_2} 
    = Q^{-2} I_n \,,\\
\label{Deq3} 
{}&  D_1 \bar{D}_1^{\sigma_1} \, P_{\sigma'} \, D_2^{\sigma_1} \bar{D}_1 + 
  D_2 \bar{D}_1^{\sigma_2} \, P_{\sigma''} \, 
  D_2^{\sigma_2} \bar{D}_2  =0 \,,
\end{align}
where $\sigma'=\sigma_1 \circ \sigma_1 \circ 
 \sigma_2^{-1} \circ \sigma_1^{-1}$
and $\sigma''=\sigma_2 \circ 
  	\sigma_1 \circ \sigma_2^{-1} \circ \sigma_2^{-1}$.
Since $D_1, D_2$ are non--singular, equation (\ref{Deq3}) requires 
that $\sigma'=\sigma''$ that is 
\begin{equation}\label{ssss}   
  \sigma_2^{-1} \circ \sigma_1 \circ 
  \sigma_1 \circ \sigma_2^{-1}  
  =  \sigma_1 \circ \sigma_2^{-1} \circ \sigma_2^{-1} 
  \circ \sigma_1  .
\end{equation}   	

Below, we will write $\sigma_1 \asymp \sigma_2$ if
$\sigma_1$ and $\sigma_2$ are commuting permutations.
For instance, (\ref{ssss}) is equivalent to the condition 
$\sigma_1 \circ \sigma_2^{-1} \asymp 
\sigma_2^{-1} \circ  \sigma_1$.

For $n$ even, let us call $\sigma_1, \sigma_2 \in \CS_n$ 
an {\em admissible} pair of permutations if 
a) they satisfy relation (\ref{ssss}); 
b) they have no common fixed points;
c) $\sigma_2^{-1} \circ \sigma_1$ has no fixed points
if $\sigma_1$ and $\sigma_2$ 
are involutions or if $\sigma_1 \asymp \sigma_2$. 
Two admissible pairs are regarded as equivalent if they are
related by the transformation (\ref{sstt}) combined,
if necessary, with the exchange 
$\sigma_1 \leftrightarrow \sigma_2$.

\begin{lem}\label{FixP}
If a pair $\sigma_1, \sigma_2 \in \CS_n$ is not 
admissible, then equation (\ref{Deq3}) has no solution 
for non--singular diagonal matrices $D_1, D_2 \in M_n$.
\end{lem}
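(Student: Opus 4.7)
The plan is to convert (\ref{Deq3}) into an entrywise scalar identity and then, at a well-chosen index $k$, derive a contradiction under each possible failure of admissibility. Using $P_\sigma D = D^\sigma P_\sigma$ for diagonal $D$, I would push both permutation matrices in (\ref{Deq3}) to the right, obtaining
\begin{equation*}
A_1 P_{\sigma'} + A_2 P_{\sigma''} = 0, \qquad
A_1 := D_1 \bar D_1^{\sigma_1} D_2^{\sigma' \circ \sigma_1} \bar D_1^{\sigma'}, \qquad
A_2 := D_2 \bar D_1^{\sigma_2} D_2^{\sigma'' \circ \sigma_2} \bar D_2^{\sigma''}.
\end{equation*}
These $A_1, A_2$ are diagonal, and by non-singularity of $D_1, D_2$ all their diagonal entries are non-zero. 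Since $A_j P_\tau$ has its non-zero entries precisely on the graph of $\tau^{-1}$, the two summands have disjoint supports unless $\sigma' = \sigma''$, so the equation already forces (\ref{ssss}). This disposes of failures of condition~(a).

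Assuming (a), I set $\sigma_0 := \sigma' = \sigma''$, cancel $P_{\sigma_0}$, and abbreviate $d_{j,k} := (D_j)_{kk}$. The equation becomes the scalar identity
\begin{equation*}
d_{1,k}\,\bar d_{1,\sigma_1^{-1}(k)}\,d_{2,\sigma_1^{-1}\sigma_0^{-1}(k)}\,\bar d_{1,\sigma_0^{-1}(k)}
 + d_{2,k}\,\bar d_{1,\sigma_2^{-1}(k)}\,d_{2,\sigma_2^{-1}\sigma_0^{-1}(k)}\,\bar d_{2,\sigma_0^{-1}(k)} = 0 \qquad (\ast)
\end{equation*}
for every $k$. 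If (b) fails, take a common fixed point $k$ of $\sigma_1,\sigma_2$; a direct check gives $\sigma_0(k)=k$, so all subscripts in $(\ast)$ collapse to $k$ and the identity reduces to $(|d_{1,k}|^2+|d_{2,k}|^2)\,\bar d_{1,k}\,d_{2,k}=0$, a contradiction.

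If (c) fails, I take a fixed point $k$ of $\sigma_2^{-1}\circ\sigma_1$, so that $\sigma_1(k)=\sigma_2(k)=:l$. In the involution subcase $\sigma_0=\sigma_2\circ\sigma_1$, whence $\sigma_1^{-1}(k)=\sigma_2^{-1}(k)=l$ and $\sigma_0^{-1}(k)=\sigma_1\sigma_2(k)=\sigma_1(l)=k$. In the commuting subcase $\sigma_0=\sigma_1\circ\sigma_2^{-1}$, and the relation $\sigma_1\sigma_2^{-1}=\sigma_2^{-1}\sigma_1$ evaluated at $k$ yields $\sigma_1^{-1}(k)=\sigma_2^{-1}(k)=:m$ and then $\sigma_0^{-1}(k)=\sigma_2\sigma_1^{-1}(k)=k$. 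In both subcases every subscript in $(\ast)$ equals $k$ or $l$ (resp.\ $m$), and the identity collapses to $(|d_{1,k}|^2+|d_{2,k}|^2)\,\bar d_{1,l}\,d_{2,l}=0$ (resp.\ with $m$), contradicting the non-singularity of $D_1, D_2$.

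The main obstacle I anticipate is the case-(c) analysis: one has to verify that each of the two structural hypotheses — being involutions, respectively commuting — forces both coincidences $\sigma_1^{-1}(k)=\sigma_2^{-1}(k)$ and $\sigma_0^{-1}(k)=k$ at the problematic index, which is precisely what makes the two cross-terms of $(\ast)$ share the common non-zero prefactor $\bar d_{1,l}\,d_{2,l}$. Without either structural hypothesis the collapse can fail, which explains why condition~(c) is imposed only conditionally. Once these permutation identities are in hand, the remaining verifications (for (a) and (b)) are essentially bookkeeping.
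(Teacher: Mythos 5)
Your proposal is correct and follows essentially the same route as the paper: it evaluates equation (\ref{Deq3}) entrywise (the paper reads off matrix entries directly rather than first pushing the permutation matrices to the right, but this is only cosmetic), forces $\sigma'=\sigma''$ from non-vanishing of a mismatched entry, and at a common fixed point, or at a fixed point of $\sigma_2^{-1}\circ\sigma_1$ in the involutive/commuting case, collapses the two terms to a common non-zero factor times $|d_{1,k}|^2+|d_{2,k}|^2$, exactly as in the paper's proof (including the key observation $\sigma_1^{-1}(k)=\sigma_2^{-1}(k)$). The only nitpick is the phrase "disjoint supports": the supports need not be wholly disjoint when $\sigma'\neq\sigma''$, but your conclusion stands since any index where $\sigma'$ and $\sigma''$ differ yields a non-vanishing entry.
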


Note that Lemma~\ref{FixP} excludes, in particular, 
the case $\sigma_1=\sigma_2$.

\begin{rem}
If $\sigma_2^{-1} \circ \sigma_1$ has no fixed points, 
then condition (\ref{trdd12b}) is satisfied for any $D_1, D_2$.
\end{rem}

\begin{propn}\label{SSn4} 
i) Every admissible pair $\sigma_1, \sigma_2 \in \CS_4$  
is equivalent to one of the
pairs in the following list: 
\begin{align*}
& a)\ \sigma_1= id,\ \sigma_2= (12)(34); &
& b)\ \sigma_1= id,\ \sigma_2= (1234);\\
& c)\ \sigma_1=(1)(23)(4),\  \sigma_2= (14)(2)(3);  &
& d)\ \sigma_1=(1)(23)(4),\  \sigma_2= (1342);\\
& e)\ \sigma_1= (1)(23)(4),\  \sigma_2=(12)(34); &
& f)\ \sigma_1=(1)(234),\  \sigma_2= (321)(4);\\
& g)\ \sigma_1= (1234),\  \sigma_2= (13)(24); &
& h)\ \sigma_1= (1234),\  \sigma_2= (12)(34);\\
& i)\ \sigma_1= (1234),\  \sigma_2= (4321);  &
& j)\ \sigma_1= (12)(34),\  \sigma_2= (14)(23).
\end{align*}
ii) For every admissible pair $\sigma_1, \sigma_2$
in this list, there exist vectors
$\vec{u}, \vec{v} \in {\mathbb R}^4$  such that matrices 
$V_1 = \frac{1}{2} \mathrm{diag}
(e^{i\pi u_1},\ldots, e^{i\pi u_4})\, P_{\sigma_1}$ and
$V_2 = \frac{1}{2} \mathrm{diag}
(e^{i\pi v_1},\ldots, e^{i\pi v_4})\, P_{\sigma_2}$ 
satisfy (\ref{norm2}) and (\ref{veq1})--(\ref{veq3})
with $Q=2\sqrt{2}$.
\end{propn}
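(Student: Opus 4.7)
The proof splits into two finite combinatorial verifications.

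For part (i), the plan is to enumerate ordered pairs $(\sigma_1,\sigma_2)\in\CS_4\times\CS_4$ modulo the simultaneous conjugation (\ref{sstt}) and the swap $\sigma_1\leftrightarrow\sigma_2$, using the pair of cycle types as the first invariant. In $\CS_4$ there are only five cycle types, namely $1^4$, $2\cdot 1^2$, $2^2$, $3\cdot 1$, and $4$, hence fifteen unordered cycle-type pairs. For each I would list representatives of the conjugation orbits (typically one or two per type) and test the three admissibility conditions: the commutation (\ref{ssss}), absence of common fixed points, and the extra fixed-point condition on $\sigma_2^{-1}\circ\sigma_1$ in the involutive or commuting subcase. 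Several cycle-type combinations are eliminated wholesale: any pair involving the identity forces $\sigma_2$ to be fixed-point-free, leaving only (a) and (b); pairs of a transposition with a 3-cycle always share a fixed point or violate (\ref{ssss}); and same-type pairs are drastically reduced by the prohibition $\sigma_1=\sigma_2$.

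For part (ii), the equal-modulus ansatz $D_j=\tfrac{1}{2}\Omega_j$ with $\Omega_j=\mathrm{diag}(e^{i\pi u^{(j)}_1},\ldots,e^{i\pi u^{(j)}_4})$ is very convenient. With every diagonal modulus equal to $1/2$, condition (\ref{trdd12a}) holds automatically, and $D_j\bar{D}_j=\tfrac{1}{4}I_4$ makes the left-hand sides of (\ref{Deq1})--(\ref{Deq2}) collapse to $\tfrac{1}{8}I_4$, so $Q=2\sqrt{2}$ is forced. The orthogonality (\ref{trdd12b}) is automatic when $\sigma_2^{-1}\circ\sigma_1$ is fixed-point-free, and otherwise reduces to a single vanishing sum of unit phases. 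Equation (\ref{Deq3}) becomes a purely phase relation supported on the permutation $\sigma'$, amounting to four equalities of unit complex numbers, i.e.\ four linear equations in the eight real unknowns $u^{(1)}_k, u^{(2)}_k$ modulo $2$.

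The heart of the argument is then a case-by-case verification over the ten admissible pairs. For each pair one computes $\sigma'$, writes the four phase equalities entry by entry, and exhibits an explicit solution $(\vec u,\vec v)$; when $\sigma_2^{-1}\circ\sigma_1$ has fixed points, the residual freedom is used to enforce (\ref{trdd12b}). I do not expect any individual case to be genuinely hard, since four equations in eight unknowns always admit a real solution, but the main obstacle is simply the bulk of the verification: a valid $(\vec u,\vec v)$ must be exhibited for each of the ten pairs, and in the mixed cases such as (f), (g), (h), and (j) the phase system is less transparent than in the simpler pairs (a), (b). A table listing the ten explicit solutions completes the argument.
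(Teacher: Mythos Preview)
Your plan for part~(i) is essentially the paper's: organize by the conjugacy classes of $\CS_4$, fix a representative $\sigma_1$ in each, and sweep $\sigma_2$ through the remaining classes modulo simultaneous conjugation and swap. The paper's write-up is only slightly more structured in that it orders the five classes $\CO_0,\ldots,\CO_4$ and restricts to $\sigma_2\in\CO_j$ with $j\ge i$; otherwise the enumeration is the same.

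For part~(ii) there is one loose statement and one structural difference worth noting. The loose statement is ``four equations in eight unknowns always admit a real solution'': this is not a valid inference. The phase equalities coming from (\ref{Deq3}) are affine constraints on $\vec u,\vec v$ modulo~$2$, and such a system can be inconsistent (two of them could combine to $0\equiv 1$). Mere underdetermination guarantees nothing; the argument stands only because you do exhibit explicit solutions, so this is a matter of presentation rather than a gap.

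The structural difference is that the paper does \emph{not} solve the phase system case by case. Instead it invokes Theorem~\ref{DDSS}: for admissible pairs satisfying (\ref{zzeq0}) with $\sigma_2^{-1}\circ\sigma_1$ fixed-point-free, everything reduces to finding $\vec u,\vec v$ such that all components of $A\vec u+B\vec v$ are odd, with $A,B$ as in~(\ref{AB1}). Eight of the ten cases --- all but f) and h) --- fall under this theorem, and the paper disposes of them with a single line of vectors; only f) and h) require direct inspection of~(\ref{Deq3}). Your approach trades this forward reference for a self-contained but longer ten-fold verification. Both are valid; yours is more elementary, the paper's more economical.
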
 

\begin{rem}
$\sigma_2^{-1} \circ \sigma_1$ has no fixed points
for all the pairs in the list except the case~h). 
\end{rem}

Let us say that $\sigma_1, \sigma_2 \in \CS_n$  have complementary 
sets of fixed points if for every $k=1,\ldots,n$, we have either $\sigma_1(k)=k$
and $\sigma_2(k) \neq k$ or $\sigma_2(k)=k$ and $\sigma_1(k) \neq k$. 
The cases a), b), and c) are of this type and they admit 
the following generalization.

\begin{propn}\label{sscomp}
For $n$ even, let $\sigma_1, \sigma_2 \in \CS_n$  be 
composed only of 1--cycles (corresponding to fixed points) 
and cycles of even length. If such $\sigma_1, \sigma_2$ 
have complementary sets of fixed points, then
$\sigma_1, \sigma_2$ is an admissible pair and
there exist diagonal matrices $D_1, D_2 \in M_n$ such that
$V_1 =D_1 P_{\sigma_1}$ and $V_2 =D_2 P_{\sigma_2}$
satisfy (\ref{norm2}) and (\ref{veq1})--(\ref{veq3})
with $Q=n/\sqrt{2}$.
\end{propn}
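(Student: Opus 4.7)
Here is the plan: first verify admissibility directly from the complementary fixed points hypothesis; then make the ansatz $D_i = n^{-1/2} E_i$ with $E_i$ diagonal unitary, so that the trace and modulus conditions are automatic; show that equations (\ref{Deq1}) and (\ref{Deq2}) are satisfied identically with $Q = n/\sqrt{2}$; and finally tune the phases of $E_1, E_2$ along the nontrivial cycles of $\sigma_1, \sigma_2$ to satisfy (\ref{Deq3}). For admissibility, write $\pi_i$ for the support of $\sigma_i$: the complementary fixed points hypothesis amounts to $\pi_1 = \mathrm{Fix}(\sigma_2)$ and $\pi_2 = \mathrm{Fix}(\sigma_1)$, so $\pi_1 \cap \pi_2 = \emptyset$ and $\pi_1 \cup \pi_2 = \{1, \ldots, n\}$. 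Disjoint supports force $\sigma_1 \asymp \sigma_2$, which renders (\ref{ssss}) trivial and precludes common fixed points; and for every $k$, exactly one $\sigma_i$ moves $k$ within its own support while the other fixes it, whence $\sigma_1(k) \neq \sigma_2(k)$ and $\sigma_2^{-1} \circ \sigma_1$ is fixed-point free.

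With $|D_i|^2 = n^{-1} I_n$, the trace condition (\ref{trdd12a}) is immediate, and (\ref{trdd12b}) follows because $P_{\sigma_1 \sigma_2^{-1}}$ has zero diagonal. Using the pull-through $P_\sigma D = D^\sigma P_\sigma$, one obtains
\[
V_i \bar V_j V_k^t V_l^* = \tfrac{1}{n^2}\, E_i\, \bar E_j^{\sigma_i}\, \bigl(E_k^{\sigma_l}\, \bar E_l\bigr)^{\sigma_i \sigma_j \sigma_k^{-1} \sigma_l^{-1}}\, P_{\sigma_i \sigma_j \sigma_k^{-1} \sigma_l^{-1}}.
\]
In each of the two terms making up (\ref{Deq1}) (and similarly (\ref{Deq2})), commutativity of $\sigma_1, \sigma_2$ collapses the composite permutation to the identity, while the remaining diagonal factor reduces to $I_n$ via $E_m \bar E_m = I_n$. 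The two terms sum to $(2/n^2)\, I_n$, which matches $Q^{-2} I_n$ exactly when $Q = n/\sqrt{2}$.

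For (\ref{Deq3}), both terms share the common permutation factor $P_{\sigma_1 \sigma_2^{-1}}$, so the equation becomes a diagonal identity. Since $V_1, V_2$ decouple with respect to the partition $\pi_1 \sqcup \pi_2$, the condition splits into independent equations on $\pi_1$ and on $\pi_2$. On $\pi_1$ I would set $E_2|_{\pi_1}$ to a scalar unitary; direct computation then reduces the equation to
\[
\alpha_a^{(1)} - \alpha_{\sigma_1^{-1}(a)}^{(1)} \equiv \pi/2 \pmod{\pi}, \qquad a \in \pi_1,
\]
where $\alpha_a^{(1)}$ is the phase of the $a$-th diagonal entry of $E_1$. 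An analogous equation on $\pi_2$ governs the phases of $E_2|_{\pi_2}$.

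The main obstacle is the solvability of this phase equation along each cycle of $\sigma_1|_{\pi_1}$. Telescoping around a cycle of length $2l$ requires the sum of $2l$ consecutive increments, each $\equiv \pi/2 \pmod{\pi}$, to vanish modulo $2\pi$; this forces a balanced choice of $l$ increments $+\pi/2$ and $l$ increments $-\pi/2$, and is feasible \emph{precisely when the cycle length is even}. This is exactly what the hypothesis provides, and an explicit solution on a cycle $(c_1, \ldots, c_{2l})$ of $\sigma_1|_{\pi_1}$ is $\alpha_{c_i}^{(1)} = 0$ for odd $i$ and $\alpha_{c_i}^{(1)} = \pi/2$ for even $i$. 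The same recipe applied to each cycle of $\sigma_2|_{\pi_2}$ completes the construction of $E_1, E_2$.
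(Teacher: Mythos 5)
Your proposal is correct, and the matrices you end up with are essentially the paper's (your phases $0,\pi/2$ alternating around each even cycle, constant on the complementary support, agree with the paper's choice up to an overall phase shift). The difference lies in how (\ref{Deq3}) is verified. The paper does not check it by hand: it observes that $\sigma_1\asymp\sigma_2$ and $\sigma_2^{-1}\circ\sigma_1$ is fixed-point free, so the pair falls under Theorem~\ref{DDSS} with $\vec{x}=\vec{0}$, and the whole problem reduces to the integrality condition (\ref{ABsub}); that condition is then solved by taking $\vec{u}=\frac14\vec{y}_1$, $\vec{v}=\frac14\vec{y}_2$, where $\vec{y}_i$ alternates $\pm1$ around each even cycle of $\sigma_i$ and vanishes on its fixed points, using the eigenvector identities $A\vec{y}_1=4\vec{y}_1$, $B\vec{y}_2=4\vec{y}_2$ and complementarity to get $\vec{w}=\vec{y}_1+\vec{y}_2$ with all entries $\pm1$. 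You instead verify (\ref{Deq3}) directly: commutativity makes $\sigma'=\sigma''=\sigma_2^{-1}\circ\sigma_1$, the equation becomes a diagonal (pure phase) identity, the complementary supports decouple it into independent systems on $\pi_1$ and $\pi_2$, and with the other matrix set to a scalar on each block the system reduces to $\alpha_a-\alpha_{\sigma_1^{-1}(a)}\equiv\pi/2\pmod{\pi}$, solvable around a cycle precisely because its length is even; I checked the componentwise reduction and it is right. Your route is more self-contained and makes the role of even cycle lengths transparent, at the cost of redoing by hand a special case of the bookkeeping that Theorem~\ref{DDSS} already packages; the paper's route is shorter given that theorem and reuses the same mechanism for Propositions~\ref{SSn4} and~\ref{VVn4k}. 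One cosmetic remark: the collapse of the composite permutations to the identity in (\ref{Deq1})--(\ref{Deq2}) needs no commutativity at all (the exponents cancel as $\sigma_i\sigma_j\sigma_j^{-1}\sigma_i^{-1}$), so that appeal in your argument is superfluous though harmless.
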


We will see below that the cases h), i), and j)
and their generalizations to greater $n$ divisible by
four allow us to construct representations of rank two
not only for $Q=n/\sqrt{2}$ but for 
$Q$ varying in the range $[n/\sqrt{2},\infty)$.

\subsection{Generalized permutations matrices, 
varying $Q$ solutions for $n= 4l$}\label{SN4}

Observe that relation (\ref{ssss}) holds if
$\sigma_1, \sigma_2$ satisfy  the following conditions:
\begin{equation}\label{zzeq0}   
 \sigma_1 \asymp \sigma_2 \circ \sigma_2 \quad
 \text{and} \quad 
\sigma_2 \asymp \sigma_1 \circ \sigma_1 \,.
\end{equation} 
For such $\sigma_1, \sigma_2$, we have
\begin{equation}\label{zzz}   
  \sigma' =\sigma''  =\sigma_2^{-1} \circ \sigma_1 ,
\end{equation}
and equation (\ref{Deq3}) acquires the following form:
\begin{equation}\label{dddd}   
  D_1^{\sigma_2} \bar{D}_1^{\sigma_2 \circ \sigma_1} 
  D_2^{\sigma_1 \circ \sigma_1} \bar{D}_1^{\sigma_1} =
 - D_2^{\sigma_2} \bar{D}_1^{\sigma_2 \circ \sigma_2} 
  D_2^{\sigma_1 \circ \sigma_2} \bar{D}_2^{\sigma_1}  .
\end{equation}
Note that (\ref{zzeq0}) holds, in particular, if $\sigma_1$ 
and $\sigma_2$ commute or if they both are involutions.  
 
 {\small
\begin{ex}\label{troiss}
For $n$ even, the following pairs $\sigma_1, \sigma_2 \in \CS_n$ 
satisfy~(\ref{zzeq0}):
\begin{align}
\label{ssigma1}  
{}&  \sigma_1=  (1,\ldots,n) \,, \qquad
  \sigma_2= (n,\ldots,1) , \\
\label{ssigma2}     
{}&  \sigma_1=  (1n)(2,n\,{-}\,1)\ldots(\frac{n}{2},\frac{n}{2}\,{+}\,1) , \qquad
  \sigma_2=  (12)(34)\ldots(n-1,n) .
\end{align}
They are admissible, respectively, for  
$n= 2l+2$,\ $l \in \mathbb N$
and  $n= 4l$,\ $l \in \mathbb N$.
For $n=4$, (\ref{ssigma1}) and (\ref{ssigma2}) recover,
respectively, the cases i) and j) in Proposition~\ref{SSn4}.
\end{ex}
}

\begin{thm}\label{DDSS} 
For $n$ even, let $\sigma_1, \sigma_2 \in \CS_n$ satisfy (\ref{zzeq0})
and let $\sigma_2^{-1} \circ \sigma_1$ have no fixed points. 
Let  $P_{\sigma_1}, P_{\sigma_2}$ be the corresponding
permutation matrices and let $A, B \in M_n$ be given by
\begin{equation}\label{AB1}  
  A= \bigl( I_n + P_{\sigma_2} \bigr)
   \bigl( P_{\sigma_2} - P_{\sigma_1} \bigr) \,, \qquad
 B= \bigl( I_n + P_{\sigma_1} \bigr)
   \bigl( P_{\sigma_1} - P_{\sigma_2} \bigr) \,.
\end{equation}
Let $\vec{x} \in {\mathbb R}^n$ be a vector such that
\begin{equation}\label{ppx}  
 P_{\sigma_1} \, \vec{x} =
  P_{\sigma_2} \, \vec{x} = -\vec{x} \,.
\end{equation}
Suppose that there exit vectors 
$\vec{u}, \vec{v} \in {\mathbb R}^n$  
such that all the components of the vector
\begin{equation}\label{ABsub}  
  \vec{w} = A\,\vec{u} + B\, \vec{v} 
\end{equation}
are odd integers. 

Let $D_1, D_2 \in M_n$ be diagonal matrices such that
\begin{equation}\label{d12unitr}  
 (D_1)_{kk} = \mu^{-1} e^{x_k+i \pi u_k} , \qquad
 (D_2)_{kk} = \mu^{-1} e^{x_k+i \pi v_k},
\end{equation}
where $\mu^2= \sum_{k=1}^n e^{2x_k}$.

Then $V_1 =D_1 P_{\sigma_1}, V_2 =D_2 P_{\sigma_2}$ satisfy 
relations (\ref{norm2}) and (\ref{veq1})--(\ref{veq3}) 
with $Q$ given by
\begin{equation}\label{Qdd}  
  Q = \frac{1}{\sqrt{2}} \, \sum_{k=1}^n e^{2x_k} ,
\end{equation}
and, therefore, $T=Q P_\CT$, $\CT \sim\{V_1,V_2\}$ 
is a solution to (T1)--(T4). 
\end{thm}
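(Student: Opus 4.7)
The plan is to verify the ansatz (\ref{d12unitr}) against the conditions derived earlier in this subsection. Because $V_i = D_i P_{\sigma_i}$ with diagonal $D_i$, the conditions (\ref{norm2}) and (\ref{veq1})--(\ref{veq3}) reduce to (\ref{trdd12a})--(\ref{trdd12b}) and (\ref{Deq1})--(\ref{Deq3}), and the hypothesis (\ref{zzeq0}) further collapses (\ref{Deq3}) to (\ref{dddd}) as already noted. The key simplification is that (\ref{d12unitr}) factors each diagonal entry into the positive modulus $\mu^{-1}e^{x_k}$ and a unimodular phase, so in particular $D_1 \bar{D}_1 = D_2 \bar{D}_2 = \mu^{-2}\,\mathrm{diag}(e^{2x_k})$.

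First I would dispatch the normalization and the diagonal equations. The equalities $\tr(D_i \bar{D}_i)=1$ follow at once from the choice of $\mu$. The off--diagonal trace in (\ref{trdd12b}) equals $\tr(D_1 \bar{D}_2\, P_{\sigma_1 \sigma_2^{-1}})$, which is supported on fixed points of $\sigma_1 \circ \sigma_2^{-1}$; this set is empty because the hypothesis on $\sigma_2^{-1}\circ\sigma_1$ is equivalent to saying that $\sigma_1$ and $\sigma_2$ agree nowhere. For (\ref{Deq1}) and (\ref{Deq2}), relation (\ref{ppx}) gives $x_{\sigma_i^{-1}(k)}=-x_k$, so the $(k,k)$--entry of each of the two summands equals $\mu^{-4}$, and the sum is $2\mu^{-4}\,I_n = Q^{-2} I_n$ with $Q$ as in (\ref{Qdd}).

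The substantive step is (\ref{dddd}). At the $(k,k)$--entry, the four diagonal factors on the LHS are indexed by $\sigma_2^{-1}(k)$, $\sigma_1^{-1}\sigma_2^{-1}(k)$, $\sigma_1^{-2}(k)$, $\sigma_1^{-1}(k)$ and those on the RHS by $\sigma_2^{-1}(k)$, $\sigma_2^{-2}(k)$, $\sigma_2^{-1}\sigma_1^{-1}(k)$, $\sigma_1^{-1}(k)$. Iterating (\ref{ppx}) shows that in each case the sum of the four corresponding $x$--values vanishes, so both sides carry the common modulus $\mu^{-4}$, and (\ref{dddd}) reduces to a phase identity modulo two. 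Using $P_\sigma P_\tau = P_{\sigma \circ \tau}$ and expanding
\begin{equation*}
A = P_{\sigma_2} - P_{\sigma_1} + P_{\sigma_2^2} - P_{\sigma_2 \sigma_1}\,, \qquad
B = P_{\sigma_1} - P_{\sigma_2} + P_{\sigma_1^2} - P_{\sigma_1 \sigma_2}\,,
\end{equation*}
one computes that $(A\vec{u}+B\vec{v})_k$ equals term--by--term the difference of the LHS and RHS phase exponents, so the odd--integer hypothesis on every $w_k$ yields exactly the sign $-1$ required by (\ref{dddd}).

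The main obstacle is this last bookkeeping: matching the four summands of each of $A$ and $B$ against the eight compositions of inverse permutations appearing in the phase exponents of (\ref{dddd}), and confirming that the $u$--phases come from $A\vec{u}$ and the $v$--phases from $B\vec{v}$ with the correct signs. Note that individual commutativity of $\sigma_1$ and $\sigma_2$ is never used — only the weaker (\ref{zzeq0}), which was already needed to pass from (\ref{Deq3}) to (\ref{dddd}) — since $A\vec{u}+B\vec{v}$ keeps the terms indexed by $\sigma_1\sigma_2$ and $\sigma_2\sigma_1$ separate.
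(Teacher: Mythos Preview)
Your proposal is correct and follows essentially the same route as the paper's proof: both reduce (\ref{norm2}) and (\ref{veq1})--(\ref{veq3}) to (\ref{trdd12a})--(\ref{trdd12b}) and (\ref{Deq1})--(\ref{dddd}), use (\ref{ppx}) to collapse the modulus parts, and then identify the remaining phase condition in (\ref{dddd}) with the requirement that all components of $A\vec{u}+B\vec{v}$ be odd. The only cosmetic difference is that the paper packages the substitution into a single exponential identity (its equation for $e^{i\pi\vec w}=-I_n$) rather than tracking the eight permutation indices componentwise as you do; your entry--by--entry bookkeeping and your expansion of $A$ and $B$ into four permutation matrices each reproduce exactly the same cancellation.
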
 

\begin{rem}\label{vvu}
Condition (\ref{ppx}) implies that
$D_i \bar{D}^{\sigma_i}_i$ is a multiple of a unitary matrix 
and hence so is $V_i \bar{V}_i$. Therefore, by
Proposition~\ref{VVG2}, we have $Q \geq n/\sqrt{2}$.
The value  $Q=n/\sqrt{2}$ is achieved only if $\vec{x}= \vec{0}$, 
in which case matrices $V_1, V_2$ are themselves 
almost unitary.
\end{rem}

For the admissible pairs given in Example~\ref{troiss}, 
vector $\vec{x}=(x,-x,x,-x,\ldots)$ satisfies (\ref{ppx}) and, 
moreover, condition (\ref{ABsub}) turns out to be resolvable 
if $n$ is divisible by four. Let us write 
$D= \mathrm{diag}_m (d_1,\ldots,d_m)$ if $D$ is a diagonal 
matrix such that $(D)_{k+m,k+m}=(D)_{k,k}$.

\begin{propn}\label{VVn4k} 
For $n=4l$, $l \in \mathbb N$, let
 $\sigma_1, \sigma_2 \in {\mathcal S}_n$ be given by
either  (\ref{ssigma1}) or (\ref{ssigma2}) and 
let $D_1, D_2 \in M_{n}$ be given by
\begin{equation}\label{D4n} 
 D_1 = \mathrm{diag}_4 (z_1,z_2,z_1,z_2) \,, \qquad
 D_2 = \mathrm{diag}_4 (z_1,-\zeta z_2,z_1, \zeta z_2) \,, 
\end{equation}
where $z_1, z_2, \zeta \in {\mathbb C}$ are such that
\begin{equation}\label{zzet1} 
 |\zeta|=1 \,, \qquad |z_1|\,|z_2| \neq 0 \,, \qquad
 |z_1|^2 + |z_2|^2  = \frac{2}{n} \,. 
\end{equation}
Then the pair $V_1=D_1 P_{\sigma_1}$,
$V_2=D_2 P_{\sigma_2}$ satisfies (\ref{norm2}) and (\ref{veq1})--(\ref{veq3}) with 
\begin{equation}\label{Q4n} 
   Q = \frac{1}{\sqrt{2}\,|z_1|\,|z_2|} \,.
\end{equation}
\end{propn}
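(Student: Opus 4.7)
The plan is to realize $(V_1, V_2)$ as an instance of the construction of Theorem~\ref{DDSS} via a judicious choice of the parameter vectors $\vec{x}, \vec{u}, \vec{v}$ of that theorem. This requires three things: (i) that $\sigma_1, \sigma_2$ meet the permutation-theoretic hypotheses of the theorem; (ii) that the matrices $D_1, D_2$ of (\ref{D4n}) are reproduced by formula (\ref{d12unitr}) for some real vectors $\vec{u}, \vec{v}$ and an $\vec{x}$ satisfying (\ref{ppx}); and (iii) that the vector $\vec{w} = A\vec{u} + B\vec{v}$ of (\ref{ABsub}) has only odd integer entries.

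For (i), in case (\ref{ssigma1}) we have $\sigma_2 = \sigma_1^{-1}$, so (\ref{zzeq0}) is automatic (powers of $\sigma_1$ commute), while in case (\ref{ssigma2}) both $\sigma_i$ are involutions, which again makes (\ref{zzeq0}) trivial. As for the absence of fixed points of $\sigma_2^{-1} \circ \sigma_1$: in case (\ref{ssigma1}) this permutation equals $\sigma_1^2$, which is fixed-point-free on an $n$-cycle with $n \geq 3$; in case (\ref{ssigma2}) a parity argument shows that $(\sigma_2 \circ \sigma_1)(k) = n-k$ for $k$ odd and $n+2-k$ for $k$ even, and the fixed-point equations $k = n-k$, $k = n+2-k$ give $k = 2l$ (even) and $k = 2l+1$ (odd), both incompatible with the required parity.

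For (ii), write $z_1 = |z_1|e^{i\pi a}$, $z_2 = |z_2|e^{i\pi b}$, $\zeta = e^{i\pi c}$ and set $\vec{x} = (x,-x,x,-x,\ldots)$ with $e^{2x} = |z_1|/|z_2|$, together with the $4$-periodic vectors $\vec{u} = (a, b, a, b, \ldots)$ and $\vec{v} = (a, b+c+1, a, b+c, a, b+c+1, a, b+c, \ldots)$. A quick check (using $n = 4l$, so that $n+1 \equiv 1 \pmod 4$) confirms (\ref{ppx}). The normalization $|z_1|^2 + |z_2|^2 = 2/n$ yields $\mu^2 = \sum_k e^{2x_k} = (n/2)(e^{2x}+e^{-2x}) = 1/(|z_1||z_2|)$, whence $|z_1| = \mu^{-1}e^x$ and $|z_2| = \mu^{-1}e^{-x}$, so that the entries of (\ref{d12unitr}) reproduce the diagonals of $D_1, D_2$ verbatim. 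Then (\ref{Qdd}) gives $Q = \mu^2/\sqrt{2} = 1/(\sqrt{2}|z_1||z_2|)$, in agreement with (\ref{Q4n}).

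The main step, and the only real computation, is (iii). The observation that streamlines the argument is that for $n = 4l$ both $P_{\sigma_1}$ and $P_{\sigma_2}$ (in either case) preserve the subspace of $4$-periodic vectors in $\mathbb C^n$: the cyclic shift and its inverse obviously do, the reversal $k \mapsto n+1-k$ does because $n+1 \equiv 1 \pmod 4$, and the swap $k \leftrightarrow k \pm 1$ does trivially. Hence $A\vec{u} + B\vec{v}$ is $4$-periodic and it suffices to check its first four entries. A direct evaluation reveals that the parameters $a, b, c$ drop out entirely (the $a$-contributions cancel from the equal placement of $z_1$ on the odd sites, while the $b$- and $c$-contributions cancel between the two sites carrying $\pm\zeta z_2$), leaving the constant patterns $\vec{w} = (-1,-1,1,1,\ldots)$ in case (\ref{ssigma1}) and $\vec{w} = (-1,1,1,-1,\ldots)$ in case (\ref{ssigma2}), both consisting entirely of odd integers. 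Theorem~\ref{DDSS} then delivers the desired solution.
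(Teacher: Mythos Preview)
Your proof is correct and follows essentially the same route as the paper's: both reduce the statement to Theorem~\ref{DDSS} via the identical parametrization $\vec{x}=(x,-x,\ldots)$, $\vec{u}=(a,b,a,b,\ldots)$, $\vec{v}=\vec{u}+(0,c{+}1,0,c,\ldots)$. The paper streamlines your step~(iii) by first noting $(P_{\sigma_1}-P_{\sigma_2})\vec{u}=0$, so that $\vec{w}=B\vec{\rho}$ with $\vec{\rho}=(0,c{+}1,0,c,\ldots)$, and then computing $(I_n+P_{\sigma_1})(P_{\sigma_1}-P_{\sigma_2})\vec{\rho}$ in two stages; this yields the same final patterns $(-1,-1,1,1,\ldots)$ and $(-1,1,1,-1,\ldots)$ you obtain.
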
 

\vspace*{1mm}{\small
\begin{ex}
For $n=4$,\  $V_1, V_2$ corresponding to $\sigma_1, \sigma_2$
given by (\ref{ssigma1}) look as follows: 
\begin{equation}\label{soln4a} 
  V_1=   \begin{pmatrix}
  0 & 0 & 0 & z_1 \\
  z_2 & 0 & 0 & 0  \\
  0& z_1 & 0 & 0 \\
  0 & 0 & z_2 & 0
\end{pmatrix} 
\,, \qquad 
 V_2=   \begin{pmatrix}
    0 & z_1 & 0 & 0 \\
    0 & 0 & -\zeta\, z_2 & 0 \\
    0 & 0 & 0 & z_1 \\
     \zeta\, z_2 & 0 & 0 & 0
\end{pmatrix}  
\end{equation} 
and $V_1$, $V_2$ corresponding to $\sigma_1, \sigma_2$
given by (\ref{ssigma2}) are 
\begin{equation}\label{soln4b} 
  V_1=   \begin{pmatrix}
  0 & 0 & 0 &  z_1\\
  0 & 0 &  z_2 & 0    \\
  0 & z_1 & 0 & 0 \\
   z_2 & 0 & 0 & 0 
\end{pmatrix}  
\,, \qquad 
 V_2=   \begin{pmatrix}
  0 &  z_1 & 0 & 0\\
  -\zeta\, z_2 & 0 & 0 & 0    \\
  0 & 0 & 0 & z_1 \\
  0& 0 & \zeta\, z_2 & 0
\end{pmatrix} 
\end{equation} 
In both cases, $|z_1|^2 + |z_2|^2  =1/2$ and $|z_1||z_2|\neq 0$.
\end{ex}
}

A pair $\sigma_1, \sigma_2 \in {\mathcal S}_n$ can be admissible despite that $\sigma_2^{-1} \circ \sigma_1$ has fixed points.
The case h) in Proposition~\ref{SSn4} is an example of such 
a pair. It can be generalized as follows.

\vspace*{1mm} {\small
\begin{ex}\label{troiss2}
For $n= 2l+2$,\ $l \in \mathbb N$, the following pair 
$\sigma_1, \sigma_2 \in \CS_n$ is admissible and satisfies~(\ref{zzeq0}):
\begin{align} 
\label{ssigma3}     
{}&  \sigma_1=  (1,\ldots,n) \,, \qquad
  \sigma_2= (12)(34)\ldots(n-1,n) .
\end{align}
\end{ex}
}
\vspace*{1mm}

\begin{propn}\label{VVn4}
For $n=4l$, $l \in \mathbb N$, let
 $\sigma_1, \sigma_2 \in {\mathcal S}_n$ be given by
(\ref{ssigma3}) and let $D_1, D_2 \in M_{n}$ be given by 
\begin{equation}\label{D4ns} 
D_1 = \mathrm{diag}_4 (z_1,z_2,z_1,z_3) \,, \qquad
 D_2 = \mathrm{diag}_4 (z_1,-\zeta \bar{z}_3,z_1, \zeta \bar{z}_2) \,, 
\end{equation}
where $z_1, z_2, z_3, \zeta \in {\mathbb C}$ are such that
\begin{equation}\label{zzet2} 
|\zeta|=1 , \qquad |z_1| \neq 0 , \qquad
|z_2| + |z_3| \neq 0, \qquad
 2|z_1|^2 + |z_2|^2 + |z_3|^2 = \frac{4}{n} \,.  
\end{equation} 
Then the pair $V_1=D_1 P_{\sigma_1}$,
$V_2=D_2 P_{\sigma_2}$ satisfies (\ref{norm2}) and (\ref{veq1})--(\ref{veq3}) with 
\begin{equation}\label{Q4ns} 
   Q = \frac{1}{   |z_1| \sqrt{|z_2|^2+|z_3|^2 } } \,.  
\end{equation}
\end{propn}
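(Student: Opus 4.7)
The plan is to verify the three equations (\ref{veq1})--(\ref{veq3}) in their equivalent form (\ref{Deq1})--(\ref{Deq3}) together with the normalization conditions (\ref{norm2}) by direct computation, exploiting the 4--periodicity built into the ansatz (\ref{D4ns}). First I would check that $\sigma_1, \sigma_2$ in (\ref{ssigma3}) fulfil (\ref{zzeq0}): the second relation is automatic because $\sigma_2$ is an involution, while $\sigma_1^2: k \mapsto k+2$ commutes with $\sigma_2$ by inspection. Hence (\ref{ssss}) holds, $\sigma' = \sigma'' = \sigma_2 \circ \sigma_1 =: \pi$, and (\ref{Deq3}) reduces to (\ref{dddd}). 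A short computation shows that $\pi$ fixes every odd index and acts on even indices as the cyclic shift $2m \mapsto 2m+2 \pmod n$; in particular Theorem~\ref{DDSS} does not directly apply, since its hypothesis that $\sigma_2^{-1}\circ\sigma_1$ be fixed--point--free fails, and the specific choice of $D_1, D_2$ in (\ref{D4ns}) must carry the burden of restoring cancellations.

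Next, I would invoke the fact that both $\sigma_1, \sigma_2$ and $D_1, D_2$ are invariant under shifting every index by~4, so each of the diagonal identities (\ref{Deq1}), (\ref{Deq2}) and the identity (\ref{dddd}) reduces to four scalar identities at $k \in \{1,2,3,4\}$. The normalization (\ref{trdd12a}) is then immediate: $\tr(D_i \bar D_i) = l(2|z_1|^2 + |z_2|^2 + |z_3|^2) = l \cdot 4/n = 1$. For (\ref{trdd12b}), the diagonal entries of $P_{\sigma_1}P_{\sigma_2}^t$ are supported at even indices (where $\sigma_2(i-1)=i$), and within a single period the two nonzero contributions $z_2 \overline{(-\zeta\bar z_3)} = -\bar\zeta z_2 z_3$ at $i=2$ and $z_3 \overline{\zeta \bar z_2} = \bar\zeta z_2 z_3$ at $i=4$ cancel. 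For (\ref{Deq1}) and (\ref{Deq2}), the diagonal entry of the left--hand side at each $k \in \{1,2,3,4\}$ evaluates to $|z_1|^2(|z_2|^2+|z_3|^2)$, matching the value of $Q^{-2}$ predicted by (\ref{Q4ns}).

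The main step is (\ref{dddd}). Moving $P_\pi$ to the right via $P_\pi D = D^\pi P_\pi$ turns (\ref{dddd}) into an equation of the form $\Lambda_1 P_\pi + \Lambda_2 P_\pi = 0$ with diagonal $\Lambda_i$; since $P_\pi$ is invertible, this amounts to $(\Lambda_1)_{kk} = -(\Lambda_2)_{kk}$ for every $k$. The $(k,k)$ entries are monomials of degree four in the diagonal entries of $D_1, D_2$, and substituting (\ref{D4ns}) reduces the proof to four explicit identities. The phases $-\zeta \bar z_3$ and $\zeta \bar z_2$ in $D_2$ are engineered so that at odd $k$ (where $\pi^{-1}(k)=k$) both sides collapse to $\pm |z_1|^2 \zeta \bar z_2 \bar z_3$, while at even $k$ (where $\pi^{-1}(2m) = 2m-2$, with the wraparound $\pi^{-1}(2) = n$ that must be handled carefully) both sides collapse to $\pm |z_1|^2 z_2 \bar z_3$ or its conjugate analog. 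I expect the main obstacle to lie in this bookkeeping --- tracking index shifts under $\sigma_1^{-1}$ and $\sigma_2$, and especially the cyclic wraparound of $\pi^{-1}$ at $k=2$ --- rather than in any substantive conceptual difficulty.
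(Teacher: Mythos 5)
Your proposal is correct and follows essentially the same route as the paper: reduce to the diagonal equations (\ref{Deq1})--(\ref{Deq3}), note that (\ref{zzeq0}) holds so (\ref{Deq3}) becomes the purely diagonal identity (\ref{dddd}), and verify everything entrywise using the period-4 structure of (\ref{D4ns}), including the cancellation in (\ref{trdd12b}) and the common value $Q^{-2}=|z_1|^2(|z_2|^2+|z_3|^2)$. Your remark that Theorem~\ref{DDSS} is inapplicable because $\sigma_2^{-1}\circ\sigma_1$ has fixed points, and your predicted entrywise values in the verification of (\ref{dddd}), are accurate and match the paper's direct computation.
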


\vspace*{1mm}{\small
\begin{ex}
For $n=4$,\  $V_1$, $V_2$ look as follows:
\begin{equation}\label{soln4c} 
  V_1=   \begin{pmatrix}
  0 & 0 & 0 & z_1 \\
  z_2 & 0 & 0 & 0  \\
  0& z_1 & 0 & 0 \\
  0 & 0 & z_3 & 0
\end{pmatrix} 
 , \qquad 
 V_2=   \begin{pmatrix}
  0 & z_1 & 0 & 0\\
  -\zeta\, \bar{z}_3 & 0 & 0 & 0    \\
  0 & 0 & 0 & z_1 \\
  0& 0 & \zeta\, \bar{z}_2 & 0 
\end{pmatrix}  ,
\end{equation}
where $2|z_1|^2 + |z_2|^2 + |z_3|^2 =1$ and 
$|z_1|(|z_2| + |z_3|) \neq 0$.
\end{ex}
}
\vspace*{1mm}

\begin{rem}
In Proposition~\ref{VVn4}, $V_1 \bar{V}_1$ and $V_2 \bar{V}_2$ 
are not multiples of unitary matrices unless $|z_2|=|z_3|$.
Furthermore, unlike the case  of Proposition~\ref{VVn4k},
we can set either $z_2=0$ or $z_3=0$. In which case,
both $V_1$ and $V_2$ become degenerate in accordance with
Proposition~\ref{VVdet}.
\end{rem}

\begin{rem}
Despite that $V_i \bar{V}_i$ in  Proposition~\ref{VVn4}
are not in general almost unitary, $Q$ given by (\ref{Q4ns})
satisfies the same inequality, $Q \geq n/\sqrt{2}$,
as in the case of Proposition~\ref{VVn4k}.
\end{rem}

 \begin{rem}
The three pairs of matrices $V_1$, $V_2 \in M_{4l}$ constructed 
in Proposition~\ref{VVn4k} and Proposition~\ref{VVn4} are 
not unitarily congruent to each 
other for generic values of $z_1, z_2, z_3$. Indeed, 
$\chi(V)\equiv \tr \bigl(V \bar{V} \bigr)$ is invariant
under unitary congruence. But for generic $z_1, z_2, z_3$, 
we have $\chi(V_1)=\chi(V_2)=0$ if $\sigma_1, \sigma_2$
are given by (\ref{ssigma1}),
$\chi(V_1) \neq 0$, $\chi(V_2) \neq 0$
if $\sigma_1, \sigma_2$ are given by (\ref{ssigma2}), 
and $\chi(V_1) = 0$, $\chi(V_2) \neq 0$
if $\sigma_1, \sigma_2$ are given by~(\ref{ssigma3}). 
\end{rem}

\begin{rem}
It was pointed out in Remark~1 of \cite{By1} that certain varying $Q$ 
solutions to (T1)--(T4) extend to non--Hermitian solutions 
to (T2)--(T4) thus extending corresponding unitary tensor space 
representations of $TL_N(Q)$ to non--unitary ones.
For instance,  $T$ given in Example~\ref{ex1} remains a solution 
to (T2)--(T4) for $q,\zeta \in {\mathbb C}\setminus \{0\}$. 
Below, we give an example for the rank two case.
\end{rem}

\vspace*{1mm} {\small
\begin{ex}
For $n=4l$, $l \in {\mathbb N}$ and the representations constructed 
in Proposition~\ref{VVn4k},  parametrize $z_1$ and $z_2$ as follows 
(cf. equation~(\ref{vTq})):  
\begin{equation}\label{zpar}
z_1= \sqrt{\frac{2}{n}}\, 
 \frac{q\,\xi_1}{\sqrt{q^2+1}} \,, \qquad
z_2= \sqrt{\frac{2}{n}}\, 
  \frac{\xi_2}{\sqrt{q^2+1}} \,, \qquad
 q>0 \,, \quad |\xi_1|=|\xi_2|=1 \,.
\end{equation}
By (\ref{Q4n}), we have $Q=n\sqrt{2}(q+q^{-1})/4$ and 
it is evident from (\ref{PiTau}) that entries of 
$T(q,\xi_1,\xi_1,\zeta) =Q P_{\CT}$  
are rational functions in $q, \xi_1, \xi_2, \zeta$
with a pole at the origin of the complex plain. Therefore,
equalities (T2)--(T4) imply that certain rational functions 
in these variables vanish identically and hence these equalities 
remain valid for  $q, \xi_1, \xi_2, \zeta \in {\mathbb C}\setminus \{0\}$.
\end{ex}
 }
\vspace*{1mm} 

The representations constructed in Proposition~\ref{VVn4} 
extend to non--unitary ones in the same vein.

 \section*{Appendix}

The proof of Theorem~\ref{VV0} will be preceded
by the following Lemma.

\begin{lem}\label{eigenW}
If $V \in M_n$ satisfies (\ref{VV}) with $Q>0$ and
$W_\CT \equiv V \bar{V}$, then the following holds:\\[1mm] 
i)  $V$ is non--singular. $\det (Q W_\CT) =1$.\\[1mm]
ii) The set of singular values of $V$ comprises
$\lfloor \frac{n}{2} \rfloor$ pairs of the form 
$(\lambda_k,\lambda'_k)$, where 
$\lambda_k\, \lambda'_k = Q^{-1}$, and if $n$ is odd, 
one unpaired singular value equal to $Q^{- \frac 12}$. \\[1mm]
iii) The set of eigenvalues of $QW_\CT$ comprises
$\lfloor \frac{n}{2} \rfloor$ pairs of the form 
$(\zeta_k,\bar{\zeta}_k)$, where $|\zeta_k|=1$,  
and if $n$ is odd, one unpaired eigenvalue equal to unity. 
\end{lem}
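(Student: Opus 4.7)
The plan is to derive all three statements from two structural observations: the unitarity of $QW_\CT$ (which, via $W_\CT W_\CT^* = Q^{-2} I_n$, is equivalent to (\ref{VV22})), and the consimilarity relation $V^{-1} W_\CT V = \bar W_\CT$.

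For (i), since $QW_\CT$ is unitary it is in particular invertible, so $V$ is non-singular. Then $\det(QW_\CT) = Q^n \det V \cdot \overline{\det V} = Q^n |\det V|^2$ is positive real; being unitary it has unit modulus, so it equals $1$.

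For (ii), I would first left-multiply (\ref{VV22}) by $V^{-1}$ and right-multiply by $V$ to obtain $\bar V V^t \cdot V^* V = Q^{-2} I_n$. Since $V^* V$ is positive definite with spectrum $\{\sigma_k^2\}$ (the squared singular values of $V$), any eigenvector $u$ of $V^* V$ with eigenvalue $\sigma_k^2$ satisfies $\bar V V^t u = Q^{-2}\sigma_k^{-2} u$, so $Q^{-2}\sigma_k^{-2}$ belongs to the spectrum of $\bar V V^t$. On the other hand $\bar V V^t = \overline{V V^*}$, and $VV^*$ has real positive spectrum coinciding (with multiplicities) with that of $V^* V$. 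Hence the involution $\sigma^2 \mapsto Q^{-2}\sigma^{-2}$ permutes the multiset $\{\sigma_k^2\}$. Its unique fixed point is $\sigma^2 = Q^{-1}$, so singular values either pair off as $(\lambda,\lambda')$ with $\lambda\lambda' = Q^{-1}$, or equal $Q^{-1/2}$. The $Q^{-1/2}$'s can themselves be grouped two-at-a-time (each such group is also a ``pair'' with product $Q^{-1}$), and a parity count on the total of $n$ singular values leaves one unpaired $Q^{-1/2}$ exactly when $n$ is odd, giving $\lfloor n/2\rfloor$ pairs in all cases.

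For (iii), note that $W_\CT V = V\bar V \cdot V = V \cdot \bar V V = V \bar W_\CT$, whence $V^{-1} W_\CT V = \bar W_\CT$; so $W_\CT$ and $\bar W_\CT$ have identical spectra (with multiplicity), and combined with $\mathrm{spec}(\bar W_\CT) = \overline{\mathrm{spec}(W_\CT)}$ this shows the spectrum of $QW_\CT$ is closed under complex conjugation. Unitarity places all eigenvalues on the unit circle, so real eigenvalues are $\pm 1$. The determinant constraint $\det(QW_\CT) = 1$ from (i), together with $|\zeta|^2 = 1$ for each conjugate pair, forces the number of $-1$'s to be even. When $n$ is odd, the number of non-real eigenvalues is even, hence the number of real eigenvalues is odd, and subtracting the even count of $-1$'s leaves an odd number of $+1$'s; pairing conjugate complex eigenvalues, the $-1$'s in twos, and all but one $+1$ in twos produces $\lfloor n/2\rfloor$ pairs $(\zeta_k,\bar\zeta_k)$ plus a single leftover eigenvalue equal to $1$.

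The main obstacle is bookkeeping multiplicities carefully: the spectrum-pairing arguments in (ii) and (iii) must respect multiplicities, and the odd-$n$ conclusions rely on a parity argument that has to distinguish the counts of $+1$'s and $-1$'s (in (iii)) and the count of $Q^{-1/2}$'s (in (ii)). Once the involution in (ii) and the conjugation invariance in (iii) are set up, the rest is essentially parity counting guided by the determinant equal to $1$.
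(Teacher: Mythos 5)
Your proposal is correct and follows essentially the same route as the paper: part (i) is identical, part (ii) rests on the same rewriting of (\ref{VV22}) as $Q^2\,\overline{VV^*}=(V^*V)^{-1}$ (your $\bar V V^t\,V^*V=Q^{-2}I_n$), and part (iii) uses the same key relation $\bar W_\CT=\bar V V=V^{-1}W_\CT V$ together with the determinant constraint to control the $-1$ eigenvalues. The only cosmetic differences are that the paper orders the singular values to pair $\lambda_k\lambda_{n+1-k}=Q^{-1}$ and exhibits the eigenspace map $x\mapsto V\bar x$ explicitly, whereas you argue via multiset invariance and conjugation of spectra.
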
 
 
\begin{proof}[\bf Proof of Lemma~\ref{eigenW}] 
$i)$ We have $\det (QW_\CT) = Q^n |\det V|^2$ and also
$|\det (QW_\CT) | =1$ since $Q W_\CT$ unitary.
Hence $\det V \neq 0$ and $\det (QW_\CT) =1$ because $Q>0$. \\
$ii)$ Let $\lambda_1 \geq \lambda_2 \geq {\ldots} \geq \lambda_n$
be the set of singular values of $V$.
Then the set of eigenvalues of $V^* V$ and 
$\overline{V V^*}$ is $\{\lambda_1^2,{\ldots},\lambda_n^2\}$. 
Equation (\ref{VV22}) can be rewritten in the
form $Q^2 \overline{V V^*} = (V^* V)^{-1}$.
Which implies that $Q^2\{\lambda_1^2,{\ldots},\lambda_n^2\}$
coincides with $\{\lambda_n^{-2},{\ldots},\lambda_1^{-2}\}$,
i.e. $\lambda_k \lambda_{n+1-k} = Q^{-1}$.
If $n$ is odd, we have $\lambda_{(n+1)/2}^2=Q^{-1}$. \\
$iii)$ Note that $\bar{W}_\CT = \bar{V} V = V^{-1} W_\CT V$.
Hence, if $\zeta \neq \pm 1$ is an eigenvalue of $QW_\CT$ and
the corresponding eigenspace is spanned by vectors $x_1,\ldots,x_m$,
then the eigenspace corresponding to $\bar{\zeta}$ 
is spanned by $V\bar{x}_1,\ldots,V\bar{x}_m$.
It remains to note that the eigenspace corresponding to
$\zeta=-1$ is even--dimensional because, by $i)$,
$\det (QW_\CT) =1$.
\end{proof}

\begin{proof}[\bf Proof of Theorem~\ref{VV0}]  
Let $V \in M_n$ satisfy (\ref{trvv}) 
and (\ref{VV}) and let $W_\CT \equiv V \bar{V}$.  

Let us first prove Theorem~\ref{VV0} assuming that
the spectrum of $W_\CT$ is simple. In this case,
taking Lemma~\ref{eigenW} into account, it follows that there exists 
$g\in U(n)$ such that $\tilde{W}= Q\, g W_\CT g^*$ is 
a diagonal unitary matrix such that 
\begin{equation}\label{gen} 
  \tilde{W}_{aa} \, \tilde{W}_{bb} = 1 
  \qquad \text{if and only if} \qquad 
  b = \sigma(a) \,,
\end{equation} 
where $\sigma \in {\mathcal S}_n$ is an involution
that has no fixed points if $n$ is even and one fixed point
if $n$ is odd. Note that $\tilde{W}^\sigma=\tilde{W}^{-1}$.
For $V_0 = g \, V \, g^t$, we have
\begin{equation}\label{VVD} 
 Q \, V_0 \bar{V}_0 =  \tilde{W} , \qquad
 Q \, \bar{V}_0 V_0 =  \tilde{W}^{-1} .  
\end{equation}  
Whence we conclude that $V_0 = \tilde{W} \, V_0 \, \tilde{W}$
and thus  $(V_0\bigr)_{ab} = 
\tilde{W}_{aa} \tilde{W}_{bb} \bigl(V_0\bigr)_{ab}$.
Taking (\ref{gen}) into account, we infer that 
$(V_0)_{ab}=0$ unless $b=\sigma(a)$.
That is, we have established that $V$ is unitarily 
congruent to $V_0 = D_0 P_{\sigma}$, where $D_0$ is a diagonal
 matrix which, by (\ref{VVD}), satisfies the equation 
$Q D_0 \bar{D}_0^{\sigma} = \tilde{W}$.
The general solution to this equation is 
$D_0= T\,w\,H$, where $T={\rm diag}(t_1,t_2,\ldots)$,
$t_k >0$ and $H={\rm diag}(\xi_1,\xi_2,\ldots)$,
$|\xi_k|=1$ are such that $Q T^{\sigma} = T^{-1}$ and
$H^{\sigma} = H$, whereas $w$ is a diagonal unitary matrix 
such that $w^2=\tilde{W}$ and $w^{\sigma} = w^{-1}$.
Let $h$ be a unitary diagonal matrix such 
that $h^2=H$ and $h^{\sigma} = h$.
Then $V_0'= h^{-1} V_0 h^{-1}$ is unitarily 
congruent to $V_0$ (and hence to $V$)
and we have $V_0'= D P_{\sigma}$, where
$D =T\,w$. Observe that
$ Q \, D  D^{\sigma} = 
Q \, T \, w \, T^{\sigma} w^{\sigma} = I_n$. 
Thus, $D$ satisfies the second equation in~(\ref{VDP2}).
The first equation in (\ref{VDP2}) for $D$ follows from 
the condition~(\ref{trvv}).

\vspace*{1mm}
In order to prove Theorem~\ref{VV0} in the general case,
i.e. without assuming anything about the spectrum of $W_\CT$, 
we will invoke some results about {\em congruence normal}\
matrices obtained first in \cite{HLV} and developed further 
in \cite{HoSe} (see also \cite{IkFa} and Problem 4.4.P41
in~\cite{HoJo}). 

Recall that $A \in M_n$ is called congruence normal if
$A \bar{A}$ is normal.

\begin{lem}[\cite{HoSe}, Theorem~5.3]\label{AbA} 
If $A$ is a non--singular congruence normal matrix,
then $A$ is unitarily congruent to a block--diagonal
matrix, where each block is of the form
\begin{equation}\label{block}   
  \bigl(s\bigr)  
  \quad \text{or} \quad
  \begin{pmatrix}
  0 &   \mu\,t \\
  t & 0
 \end{pmatrix} ,\qquad
  s,t >0 \,,\ \mu \in {\mathbb C}\setminus\{0,1\} .
\end{equation}
\end{lem}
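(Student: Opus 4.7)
The plan is to exploit the compatibility between unitary congruence of $A$ and unitary similarity of $A\bar A$. For any $g \in U(n)$, the relation $g^t \bar g = (\bar g^t g)^t = I$ gives $(gAg^t)\overline{(gAg^t)} = g(A\bar A)g^*$, so the map $A \mapsto A\bar A$ intertwines unitary congruence of $A$ with unitary similarity of $A\bar A$. Since $A\bar A$ is normal, a preliminary unitary congruence brings it to diagonal form, and I may assume $A\bar A = \Lambda = \mathrm{diag}(\lambda_1, \ldots, \lambda_n)$.

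The next step extracts the constraint this places on $A$ itself. From $\Lambda A = (A\bar A)A = A(\bar A A) = A\,\overline{A\bar A} = A\bar\Lambda$ I read off $(\lambda_i - \bar\lambda_j) A_{ij} = 0$, so $A_{ij}=0$ unless $\lambda_i = \bar\lambda_j$. Permuting the basis (a unitary congruence), I group indices by the eigenvalues of $\Lambda$ and obtain a block decomposition of $A$ of two types: square diagonal blocks $B$ attached to each real eigenvalue $\lambda$ of $\Lambda$ satisfying $B\bar B = \lambda I$, and for each pair of non-real conjugate eigenvalues $\lambda,\bar\lambda$ a $2\times 2$ off-diagonal arrangement $\bigl(\begin{smallmatrix} 0 & B_1 \\ B_2 & 0\end{smallmatrix}\bigr)$ with $B_1 \bar B_2 = \lambda I$. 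Since $A$ is non-singular, all $\lambda_i$ are non-zero.

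The final step is to refine each of these blocks by a further unitary congruence that preserves the block structure. For a real-eigenvalue block with $\lambda>0$, the matrix $\lambda^{-1/2} B$ is a coninvolution ($C\bar C = I$), which by a Takagi-type factorization is unitarily congruent to $I$, producing $1\times 1$ blocks $(s)$ with $s = \sqrt\lambda$; for $\lambda<0$ the analogous analysis produces $2\times 2$ blocks with $\mu \in \mathbb R_{<0}$. For a conjugate-pair block, I would use the SVD of $B_1$ to choose compatible unitary bases on the two paired eigenspaces that simultaneously diagonalize $B_1$ and $B_2$ (made possible by $B_1 \bar B_2 = \lambda I$); this yields a direct sum of $2\times 2$ blocks in which the scalar $\mu t^2 = \lambda$ determines $(\mu, t)$ up to the freedom that $\mu \in \mathbb C\setminus\{0,1\}$ (the value $\mu=1$ is excluded because $\bigl(\begin{smallmatrix} 0 & t \\ t & 0\end{smallmatrix}\bigr)$ is unitarily congruent to $t\, I_2$ via the $g_0$ of equation~(\ref{PPg12}), and hence collapses into two $1\times 1$ blocks).

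The main obstacle is the last step: the refined canonical forms for blocks $B$ with $B\bar B = \lambda I$ and for the paired blocks $B_1,B_2$ with $B_1\bar B_2 = \lambda I$ require genuinely new Takagi-style decompositions, not the classical one for symmetric matrices. These are the technical heart of the Horn–Sergeichuk argument \cite{HoSe}, and for the purposes of the present paper it suffices to invoke their Theorem~5.3 directly, as is done here.
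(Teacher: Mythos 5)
The paper itself offers no proof of this lemma: it is imported verbatim from \cite{HoSe} (Theorem~5.3), so your closing move of invoking that theorem is exactly what the paper does. The issue is that the reduction you sketch before the citation contains a concrete error, and it occurs at precisely the step that is the substance of the theorem. You claim that for a positive real eigenvalue $\lambda$ of $A\bar A$ the corresponding diagonal block $B$ (with $B\bar B=\lambda I$) is, after rescaling, a coninvolution that is ``unitarily congruent to $I$'' and hence contributes only $1\times 1$ blocks $(s)$ with $s=\sqrt{\lambda}$. This is false: unitary congruence preserves singular values, and the singular values of a coninvolution come in reciprocal pairs $s,1/s$ that need not equal $1$. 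For example $C=\bigl(\begin{smallmatrix}0&2\\ 1/2&0\end{smallmatrix}\bigr)$ satisfies $C\bar C=I_2$ but has singular values $2$ and $1/2$, so it is not unitarily congruent to $I_2$. Such blocks are exactly why the canonical form (\ref{block}) allows $2\times 2$ blocks with $\mu$ real, positive and $\neq 1$ (here $t=1/2$, $\mu=4$); they cannot be collapsed to $1\times 1$ blocks, and the paper's own use of the lemma in the proof of Theorem~\ref{VV0} depends on them (the blocks $B_k$ with $\zeta_k=1$ but $t_k\neq 1$).

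The rest of your reduction is sound: the intertwining identity $(gAg^t)\overline{(gAg^t)}=g(A\bar A)g^*$, the zero-pattern argument $\lambda_i\neq\bar\lambda_j\Rightarrow A_{ij}=0$, and the treatment of a non-real conjugate pair $\lambda,\bar\lambda$ (an SVD of $B_1$ together with $B_1\bar B_2=\lambda I$ simultaneously diagonalizes the pair, after which a diagonal-phase congruence normalizes the lower entries to be positive). But the genuinely hard case is the real-eigenvalue blocks, i.e.\ the unitary-congruence canonical form of coninvolutions ($C\bar C=I$, from $\lambda>0$) and of the analogous matrices with $C\bar C=-I$ (from $\lambda<0$), which yields a direct sum of $(1)$'s and hyperbolic $2\times2$ blocks rather than the identity. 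Since your sketch mis-states that case and then defers to \cite{HoSe} anyway, the proposal does not supply a correct proof beyond the citation the paper already makes; to be self-contained it would need the canonical form for coninvolutions under unitary congruence, which is the technical heart of Horn--Sergeichuk's argument.
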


\begin{rem}
For $\mu=1$, the 2$\times$2 block in (\ref{block}) is 
unitarily congruent to $t I_2$, cf.~eq.~(\ref{PPg12}).
\end{rem}

Now, let $V \in M_n$ satisfy (\ref{VV}) and 
$A \equiv Q^{\frac 12 } V$.
Since the r.h.s. of (\ref{VV}) is a unitary matrix,
$A$ is non--singular and congruence normal.
Therefore, by Lemma~\ref{AbA}, $A$ is unitarily 
congruent to a block--diagonal
matrix with blocks as in~(\ref{block}).
Hence $A \bar{A}$ is unitarily similar  
to a block--diagonal matrix with blocks
$\bigl(s_k^2\bigr)$ and $t_k^2 \, {\rm diag}(\mu_k,\bar{\mu_k})$.
Note that $s_k = |\mu_k| \, t_k^2 =1$ because
$A \bar{A}$ is unitary. So, $A$ is unitarily congruent to
$A'={\rm diag}(1,\ldots,1,B_1,B_2,\ldots)$, where
$B_k = \bigl(\begin{smallmatrix}
  0 & \zeta_k/t_k \\
  t_k & 0
 \end{smallmatrix}\bigr)$, $t_k >0$, $|\zeta_k| =1$.
By Lemma~\ref{ABUC}, ${\rm diag}(1,1)$ and
$B_k$ are unitarily congruent, respectively, to
$\bigl(\begin{smallmatrix}
  0 & 1\\
  1 & 0
 \end{smallmatrix}\bigr)$
and 
$\tilde{B}(z_k) =\bigl(\begin{smallmatrix}
  0 & z_k\\
  z_k^{-1} & 0
 \end{smallmatrix}\bigr)$,
where $z_k=\zeta_k^{1/2}/t_k \in {\mathbb C}\setminus\{0\}$.
Thus, $A'$ (and hence $A$) is unitarily congruent to  
$A'' = {\rm diag}(\tilde{B}(z_1),\ldots,
\tilde{B}(z_{\lfloor \frac{n}{2} \rfloor}),1)$,
where some $z_k$'s can be equal to unity and the 
last unity block is present if $n$ is odd.
So, $V$ is unitarily congruent to
$V''=Q^{-1/2}A''=D P_{\sigma_0}$, where 
$\sigma_0 = (12)(34)\ldots$ and 
$D=Q^{-1/2}\text{diag}(z_1,1/z_1,z_2,1/z_2,\ldots)$.
Clearly, we have
\begin{equation}\label{dds0}   
   Q \, D \, D^{\sigma_0} = I_n .
\end{equation}
That is, $D$ satisfies the second equation in~(\ref{VDP2}).
The first equation in (\ref{VDP2}) for $D$ follows from 
the condition~(\ref{trvv}).

It remains to note that every involution 
$\sigma \in {\mathcal S}_n$ that has the same number of fixed 
points as $\sigma_0$ does can be constructed as
$\sigma = \tau \circ \sigma_0 \circ \tau^{-1}$
by choosing a suitable $\tau \in {\mathcal S}_n$.
Therefore $V$ is unitarily congruent to 
$V'''= P_\tau D P_{\sigma_0} P^t_\tau = D' P_\sigma$, 
where $D'=D^\tau$. Obviously, we have 
$\tr D \bar{D} = \tr D' \bar{D}'$.
And applying the permutation $\tau$ to (\ref{dds0}), we
verify the second equation in~(\ref{VDP2}) for $D'$: 
$I_n= Q D^\tau (D)^{\tau \circ\sigma_0}
=Q D' (D')^\sigma$.
\end{proof}

\begin{proof}[\bf Proof of Proposition~\ref{VVdet}]   

Recall Jacobi's identity for submatrices  (see, e.g. 
eq.~(0.8.4.2) in \cite{HoJo}): 
if $A$ and $A^{-1}$ are partitioned matrices,
$A=\Bigl(\begin{smallmatrix} A_{11} & A_{12} \\ 
A_{21} & A_{22}\end{smallmatrix}\Bigr)$, 
$A^{-1}=\Bigl(\begin{smallmatrix} A'_{11} & A'_{12} \\ 
A'_{21} & A'_{22}\end{smallmatrix}\Bigr)$
and the blocks $A_{11}$ and $A'_{11}$ are of the
same size, then $\det A'_{22} = (\det A_{11}) / \det A $.
Take $A=QW_\CT$ and $A_{11} = Q V_1 \bar{V}_1$.
Since $A$ is unitary, we have $A^{-1}=A^*$
and so $A'_{22}=Q V_2^t V_2^*$.
Invoking Jacobi's identity and taking into account
that $|\det A|=1$, we infer that
$|\det V_1|^2= |\det (A_{11}/Q)| = |\det (A'_{22}/Q)| = |\det V_2|^2$.
Thus, $|\det V_1|= |\det V_2|$ and hence $V_1$ and 
$V_2$ are either both singular or both non--singular.  
 
Rewriting equation (\ref{veq3}) in the form 
$V_1 \bar{V}_1 V_2^t V_1^* = -
    V_2 \bar{V}_1 V_2^t V_2^*$
and comparing the determinants of the both sides,
we infer that $\det(\bar{V}_1 V_2^t) |\det V_1|^2
 = (-1)^n  \det(\bar{V}_1 V_2^t) |\det V_2|^2$.
Whence it follows that, if $n$ is odd, $V_1$ and $V_2$
cannot be both non--singular.  Taking into account that 
$|\det V_1|= |\det V_2|$, we conclude that
both $V_1$ and $V_2$ are singular.
\end{proof}

\begin{proof}[\bf Proof of Proposition~\ref{VVG2}]   
$ii)$ Eq. (\ref{veq1}) was derived from the condition 
$Q^2 W_{\CT} W^*_{\CT} =I_{2n}$. 
Its counterpart derived from the equivalent 
condition $Q^2 W^*_{\CT} W_{\CT} =I_{2n}$ reads
\begin{equation}\label{veq1b}    
   V_1^t V_1^* V_1 \bar{V}_1  +
    V_2^t V_1^* V_1 \bar{V}_2  = Q^{-2} I_n \,.
\end{equation} 
Since $\alpha V_1 \bar{V}_1$ is unitary, $V_1$ is  non--singular
and we have $\alpha^2 \bar{V}_1 V_1^t =  ( V_1^* V_1 )^{-1}$.
Using this relation, we rewrite (\ref{veq1}) and (\ref{veq1b}) as follows:
\begin{eqnarray}
\label{veq5}   
&&  (V_2 V_1^{-1}) (V_2 V_1^{-1})^* = (\alpha^2 Q^{-2} -1) I_n \,, \\
 \label{veq5b}    
&&  (V_1^{-1} V_2)^t ( \overline{V_1^{-1} V_2}) = 
  (\alpha^2 Q^{-2} -1) I_n \,. 
\end{eqnarray} 
Taking Proposition~\ref{VVdet} into account, we infer that
the l.h.s. of (\ref{veq5}) and (\ref{veq5b}) are positive definite matrices
and their determinants are equal to unity.  Therefore, 
$\alpha^2 Q^{-2} -1 =1$ and thus $\alpha =\sqrt{2}Q$.
Furthermore, (\ref{veq5}) and (\ref{veq5b}) imply, respectively, that
$V_2 = g' \, V_1$ and $V_2 = V_1 g$ where $g$ and $g'$ are unitary.

$i)$ As a consequence of $ii)$, we have
$\alpha V_1 \bar{V}_2 = \alpha V_1 \bar{V}_1 \bar{g}$,
$\alpha V_2 \bar{V}_1 = \alpha g' V_1 \bar{V}_1$,
$\alpha V_2 \bar{V}_2 = \alpha g' V_1 \bar{V}_1 \bar{g}$,
and so all these matrices are unitary.

$iii)$ Another consequence of $ii)$ is
$\bar{V}_2 V_2 ^t = \bar{V}_1 V_1^t$
and hence $\tr V_2 V_2^* = \tr V_1 V_1^* =1$.
Multiplying (\ref{veq3}) with $V_1^{-1}$ from the left and with $(V^*_1)^{-1}$
{}from the right, taking trace, and taking into
account that  $V_1^{-1} V_2 =g$ is unitary, we conclude
that $\tr \bar{V}_1 V_2^t=0$.

$iv)$ The inequality on $Q$ is implied by Proposition~7 from \cite{By1}
for $r=2$.
\end{proof} 

\begin{proof}[\bf Proof of Theorem~\ref{FGn3}]  
For $V_1, V_2$ given by (\ref{n3part}), the corresponding  
matrix $W_\CT$  can be brought to a block diagonal form 
by permutations of its block--rows and block--columns. 
Indeed, let $P_1$ and $P_2$ be the permutation matrices
corresponding to the permutations $\{123456\} \to \{143625\}$
and $\{123456\} \to \{134625\}$, respectively. Then we have
\begin{equation}\label{Wpp}   
 \bigl( P_1 \otimes I_p) \, W_{\CT} \, \bigl( P_2^t \otimes I_p) =
 \begin{pmatrix}
  W_1 & 0 \\ 
  0 & \bar{W}_2
\end{pmatrix}  \,,
\end{equation} 
where $W_1 \in M_{4p}$ and $W_2 \in M_{2p}$ are given by
\begin{equation}\label{W1}   
  \alpha_1 \alpha_2 \,  W_1 = 
    H_1 \otimes_{\scriptscriptstyle 2 \times 2} H_2
\end{equation} 
and
\begin{equation}\label{W2}   
  W_2 =  \begin{pmatrix}
  G_{11} F_{11} + G_{12} F_{21} & 
  G_{21} F_{11} + G_{22} F_{21} \\ 
  G_{11} F_{12} + G_{12} F_{22} & 
  G_{21} F_{12} + G_{22} F_{22}
\end{pmatrix} .
\end{equation} 
In (\ref{W1}), the Kronecker product is understood as that
for $2\,{\times}\,2$ matrices $H_1, H_2$ that have 
noncommuting entries
$\alpha_1 F_{ij}$, $\alpha_2 G_{ij}$. In other words, we have
\begin{equation}\label{W1b}   
  W_1 = \sum_{a,b,c,d =1} ^2
  E_{ab} \otimes E_{cd} \otimes F_{ab} \, G_{cd} \,,
\end{equation}
where $E_{ab}$ are the basis $2\,{\times}\,2$ matrices,
i.e. $(E_{ab})_{ij}=\delta_{ai}\delta_{bj}$.

Equation (\ref{Wpp}) implies that $Q W_\CT$ is unitary
iff $QW_1$ and $QW_2$ are unitary. By (\ref{W1b}),
we have
\begin{equation}\label{W1c}   
 W_1 \, W_1^* =  \sum_{a,b,c,d,i,j =1} ^2
  E_{ab} \otimes E_{cd} \otimes F_{ai} \, G_{cj} 
  \, G^*_{dj} \, F^*_{bi} \,.
\end{equation}
Therefore, if $H_1, H_2$ are unitary, that is if  
the following relations hold:
\begin{equation}\label{FfGg}   
  \alpha_1^2 \sum_{b=1}^2 F_{ab}  \, F^*_{cb} = 
   \alpha_1^2 \sum_{b=1}^2  F^*_{ba}  \, F_{bc}=
  \delta_{ac}  I_p =
  \alpha_2^2 \sum_{b=1}^2 G_{ab}  \, G^*_{cb} =
  \alpha_2^2 \sum_{b=1}^2 G^*_{ba}  \, G_{bc} ,
\end{equation}
we infer from (\ref{W1c}) that $ \alpha_1  \alpha_2 W_1$
is unitary.

Next, if relation (\ref{gamHH}) holds, we can rewrite
$W_2$ given by (\ref{W2}) in the following form:
\begin{equation}\label{W2b}   
  \alpha_1  \alpha_2 \, W_2 =  S \, H_2  H_1 \, \bar{S}\,, \qquad
  S={\rm diag} (\zeta^{\frac 12}, \bar{\zeta}^{\frac 12}) 
  \otimes I_p \,.
\end{equation}
Whence it is evident that  $ \alpha_1  \alpha_2 W_2$
is unitary if  $H_1, H_2$ are unitary.

Finally, we note that relations (\ref{norm2})  for $V_1, V_2$ 
given by (\ref{n3part}) acquire the following form:
\begin{equation}\label{trFG}   
   \tr V_i V_j^* = \sum_{a=1}^2  \tr F_{ai} F^*_{aj} +
   \sum_{a=1}^2  \tr \bar{G}_{ia} G^t_{ja} = \delta_{ij}\,.
\end{equation}
Taking relations (\ref{FfGg}) into account, we see that (\ref{trFG}) 
holds providing that $H_1, H_2$ are unitary 
and condition (\ref{betatr}) is satisfied.
\end{proof} 

\begin{proof}[\bf Proof of Proposition~\ref{FGex1}]
$i)$ We have $H_1 = U$, $H_2 = U^*$.
The l.h.s. and the r.h.s. of (\ref{gamHH}) are equal,
respectively, to the $(12)$ and $(21)$ blocks of
$(U^* U)$ and hence they vanish identically. \\
$ii)$
It is straightforward to verify that $H_1, H_2$ are unitary and
 that both sides of  (\ref{gamHH}) vanish identically.
\end{proof} 

\begin{proof}[\bf Proof of Proposition~\ref{FGex2}]
It is straightforward to verify that $H_1, H_2$
are unitary providing that relations (\ref{Daa}) hold. Further, we have 
$G_{11} F_{12} + G_{12} F_{22} = 
 M Z_1^{\sigma_2 \circ \sigma_1} P_{\sigma_2} P_{\sigma_1}$ and
$G_{21} F_{12} + G_{22} F_{22} = -Z_2 \bar{M} P_{\sigma_2} P_{\sigma_1}$.
Therefore condition (\ref{ZZss}) implies equality~(\ref{gamHH}).
\end{proof} 

\begin{proof}[\bf Proof of Lemma~\ref{FixP}]   
For the sake of brevity, if $D$ is a diagonal matrix, we will write for 
its diagonal entries $D_i$ instead of $D_{ii}$. Recall that $D_1, D_2$ 
are non--singular.

If $\sigma_1, \sigma_2$ does not satisfy (\ref{ssss}), then 
$\sigma' \neq \sigma''$ and so there exist $i, j$ such that
$(P_{\sigma'})_{ij} =1$ but $(P_{\sigma''})_{ij} = 0$ and thus
the $(ij)$ matrix entry of the l.h.s. of (\ref{Deq3}) cannot vanish. 
 
Suppose that $\sigma_1, \sigma_2$ satisfy  (\ref{ssss}) but
$\sigma_1(i)=\sigma_2(i)=i$ for some~$i$.
Then we have $(P_{\sigma'})_{ii}= (P_{\sigma''})_{ii}=1$. 
Therefore the $(ii)$ matrix entry of the l.h.s. of (\ref{Deq3}) is 
$M_{ii} \equiv \bigl( |(D_1)_{i}|^2   +
 |(D_2)_{i}|^2 \bigr) (\bar{D}_1)_{i} (D_2)_{i} $ 
and so it cannot vanish.

If $\sigma_1$ and $\sigma_2$ are involutions
or they commute, then $\sigma_1, \sigma_2$ satisfy  (\ref{ssss}) 
and $\sigma' = \sigma'' =\sigma_2^{-1} \circ \sigma_1$.
Suppose that $(\sigma_2^{-1} \circ \sigma_1)(i)=i$ for some~$i$.
Then $(P_{\sigma'})_{ii}= (P_{\sigma''})_{ii}=1$.
Therefore the $(ii)$ matrix entry of the l.h.s. of (\ref{Deq3}) is 
$M_{ii} \equiv |(D_1)_{i}|^2 (\bar{D}_1)_{\sigma_1^{-1}(i)} 
 (D_2)_{\sigma_1^{-1}(i)}
 + |(D_2)_{i}|^2 (\bar{D}_1)_{\sigma_2^{-1}(i)} 
 (D_2)_{\sigma_2^{-1}(i)}$. 
Note that, for commuting $\sigma_1$ and $\sigma_2$,
equality $(\sigma_2^{-1} \circ \sigma_1) (i) =i$
implies that $\sigma_2^{-1} (i) = \sigma_1^{-1} (i)$.
The same is true if $\sigma_1$ and $\sigma_2$ are involutions.
Therefore,
$M_{ii} = \bigl(|(D_1)_{i}|^2 + |(D_2)_{i}|^2 \bigr) 
 (\bar{D}_1)_{\sigma_1^{-1}(i)} (D_2)_{\sigma_1^{-1}(i)}$
which cannot vanish.
\end{proof}  

\begin{proof}[\bf Proof of Proposition~\ref{SSn4}] 
$i)$ The group $\CS_4$ splits into five nonintersecting conjugacy
classes, $\CO_i$, $i=0,{\ldots},4$. 
For every two elements $\sigma_1, \sigma_2 \in \CO_i$, 
there exists $\tau \in \CS_4$ such that 
$\sigma_2 = \tau^{-1} \circ \sigma_1 \circ \tau$. 
$\CO_0$ contains only $\sigma=id$.
$\CO_1$ contains six involutions that have two fixed
points, e.g. $\sigma=(1)(23)(4)$.
$\CO_2$ contains eight elements that have one fixed
point and are of order three, e.g. $\sigma=(123)(4)$.
$\CO_3$ contains six elements that have no fixed
points and are of order four, e.g. $\sigma=(1234)$.
$\CO_4$ contains three involutions that have no fixed
points, e.g. $\sigma=(12)(34)$.
Without a loss of generality, we will take the mentioned 
above representatives of each conjugacy class $\CO_i$ as 
$\sigma_1$ and will search for all inequivalent 
admissible pairs $\sigma_1, \sigma_2$, where
$\sigma_2 \in \CO_j$, $j \geq i$.

For $\sigma_1=id$,  $\sigma_2$ must have no fixed points. 
We can take as $\sigma_2$ the mentioned above representatives 
of $\CO_3$ and~$\CO_4$.

For $\sigma_1=(1)(23)(4)$, the only suitable 
$\sigma_2$ from $\CO_1$ is $(14)(2)(3)$ because 
$\sigma_2^{-1} \circ \sigma_1$ must have no fixed points.
Further, note that $\sigma_2$ cannot be from $\CO_2$
because, in this case, equation (\ref{ssss})
would imply that $\sigma_1 \asymp \sigma_2^{-2}=\sigma_2$. 
However, the commutant of every $\sigma \in \CO_2$
consists only of $id,\sigma, \sigma^{-1}$.
The suitable elements from $\CO_3$ are 
$\sigma_2=(1342)$, $\sigma_2=(1243)$ and these from
$\CO_4$ are $\sigma_2=(12)(34)$, $\sigma_2=(13)(24)$.
In the each case, the corresponding admissible pairs are equivalent 
by the transformation (\ref{sstt}) with $\tau=\sigma_1$.
 
The consideration for $\sigma_1 \in \CO_2, \CO_3$
is similar and we omit its details.  
Finally, for $\sigma_1=(12)(34) $, $\sigma_2$ can be either of
the other two elements from $\CO_4$. The corresponding 
admissible pairs are equivalent by the transformation (\ref{sstt}) 
with $\tau=(1)(2)(34)$.

\vspace*{1mm}
$ii)$ We have $4 D_i \bar{D}_i = I_4$ and hence (\ref{trdd12a}) 
is satisfied and (\ref{Deq1})--(\ref{Deq2}) hold for $Q^2=8$. 
Note that $\sigma_2^{-1} \circ \sigma_1$ has no fixed points 
in all the cases except~h). So, (\ref{trdd12b}) is satisfied 
trivially except for the case h), where we have 
$\tr \bigl( D_1 \bar{D}_2 P_{\sigma_1} P_{\sigma_2}^t \bigr) =
e^{i\pi (u_2-v_2)}+e^{i\pi (u_4-v_4)}$.
Note also that $\sigma_1, \sigma_2$ satisfy the hypotheses of
Theorem~\ref{DDSS} in all the cases except h) and~f).
Therefore, in all these cases, it is sufficient to find 
$\vec{u}, \vec{v}$ that fulfil condition~(\ref{ABsub}). 
It is straightforward to check that suitable pairs of vectors can
be chosen as follows: $ {\vec u}=0$ for all the cases except c) and
\begin{eqnarray*}
& a), b): \ 4{\vec v}=(1,-1,1,-1);   \qquad
 c): \ 4{\vec u}=(0,1,-1,0),\ \ 4{\vec v}=(1,0,0,-1); & \\[0.5mm]
& d), e): \  2{\vec v}=(1,0,0,1);  \qquad
  f), g), h): \  {\vec v}= (1,1,0,0); \qquad
 i), j) : \   {\vec v}=(1,0,0,0).   &
\end{eqnarray*}
In the cases h) and f), one has to verify 
relation (\ref{Deq3}) by inspection.
\end{proof} 

\begin{proof}[\bf Proof of Proposition~\ref{sscomp}]
$\sigma_1, \sigma_2$ is an admissible pair because
$\sigma_1 \asymp \sigma_2$ and $ \sigma_2^{-1} \circ \sigma_1$
has no fixed points.  The latter property implies also that
(\ref{trdd12b}) is satisfied trivially. Set 
$D_1 = \frac{1}{\sqrt{n}} \mathrm{diag}
(e^{i\pi u_1},\ldots, e^{i\pi u_n})$ and
$D_2 = \frac{1}{\sqrt{n}} \mathrm{diag}
(e^{i\pi v_1},\ldots, e^{i\pi v_n})$, 
${\vec u}, {\vec v} \in {\mathbb R}^n$.
Then $n D_i \bar{D}_i = I_n$ and hence (\ref{trdd12a}) is 
satisfied and  (\ref{Deq1})--(\ref{Deq2}) hold for $Q^2=n^2/2$.  
Note that $\sigma_1, \sigma_2$ satisfy the hypotheses of
Theorem~\ref{DDSS}, hence it is sufficient to find 
$\vec{u}, \vec{v}$ that fulfil condition~(\ref{ABsub}).
Consider vectors ${\vec y}_i \in \mathrm{Ker} (I_n+P_{\sigma_i})$, $i=1,2$
such that  $({\vec y}_i)_k=0$ if $k$ is a fixed point of $\sigma_i$ and
$({\vec y}_i)_k= \pm 1 $ otherwise.
Clearly, there is an equal amount of $+1$ and $-1$
among the components of ${\vec y}_i$  corresponding to each
cycle in~$\sigma_i$. Since $\sigma_1$ and $\sigma_2$ have 
complementary sets of fixed points, we have 
a) $({\vec y}_1 + {\vec y}_2)_k= \pm 1$ for all~$k$;\ 
b) $P_{\sigma_2} {\vec y}_1 = {\vec y}_1$,  
$P_{\sigma_1} {\vec y}_2 = {\vec y}_2$, so that 
$A {\vec y}_1 = 4{\vec y}_1$,  $B {\vec y}_2 =4 {\vec y}_2$. 
Therefore ${\vec u} = \frac{1}{4} {\vec y}_1$
and ${\vec v} = \frac{1}{4} {\vec y}_2$  fulfil condition~(\ref{ABsub}).
\end{proof} 

\begin{proof}[\bf Proof of Theorem~\ref{DDSS}]
For the brevity of notations, let $e^{\vec{y}}$, 
where $\vec{y}\in {\mathbb C}^n$, stand for the
diagonal matrix $\mathrm{diag} (e^{y_1},\ldots,e^{y_n})$.
Then, for $D_1, D_2$ given by (\ref{d12unitr}),
we have $D_i \bar{D_i} = \mu^{-2} e^{2 \vec{x}}$
and therefore both equations (\ref{Deq1})--(\ref{Deq2})
are equivalent to the following one:
\begin{equation}\label{DPxQ} 
   \mu^{-2} e^{2 (I_n+P_{\sigma_1})\vec{x}} +
   \mu^{-2} e^{2 (I_n+P_{\sigma_2})\vec{x}} = Q^{-2} I_n .
\end{equation}
If (\ref{ppx}) is satisfied, then (\ref{DPxQ}) holds
and we have $Q^2= \mu^2/2$.

Since $\sigma_1, \sigma_2$ satisfy (\ref{zzeq0}), we have to
to verify that (\ref{dddd}) holds.
Substituting (\ref{d12unitr}) into  (\ref{dddd}), we obtain
the following equation:
\begin{equation}\label{DPxuv} 
   e^{ (P_{\sigma_1}+P_{\sigma_2})(I_n+P_{\sigma_1})\vec{x}
   + i\pi(A-P_{\sigma_2}^2) \vec{u} + 
   i\pi P_{\sigma_1}^2 \vec{v}}  
  =-  
  e^{ (P_{\sigma_1}+P_{\sigma_2})(I_n+P_{\sigma_2})\vec{x}
   - i\pi (B-P_{\sigma_1}^2) \vec{v} - 
   i\pi P_{\sigma_2}^2 \vec{u}} ,
\end{equation}
where $A, B$ are given by~(\ref{AB1}). 
If (\ref{ppx}) is satisfied, then (\ref{DPxuv})
is equivalent to equation $e^{i\pi\vec{w}} = -I_n$
which implies that all the components of $\vec{w}$
must be odd integers.

It remains to note that relations (\ref{trdd12a})--(\ref{trdd12b}) 
are satisfied thanks to the choice of $\mu$ in (\ref{d12unitr})
and the condition that $\sigma_2^{-1} \circ \sigma_1$
has no fixed points.
\end{proof}  

\begin{proof}[\bf Proof of Proposition~\ref{VVn4k}] 

$\zeta, z_1, z_2 \in {\mathbb C}$ satisfying (\ref{zzet1})
can be parametrized as follows: $\zeta=e^{i \pi \phi}$, 
$z_1 = \mu^{-1} e^{x+i \pi \alpha}$, 
$z_2 = \mu^{-1} e^{-x+i \pi \beta}$,
where $x,\mu,\alpha,\beta \in {\mathbb R}$ and
$\mu^2=n \cosh(2x)$. Therefore, $D_1, D_2$ are
given by (\ref{d12unitr}), where $\vec{x}=(x,-x,x,-x,\ldots)$,
$\vec{u}=(\alpha,\beta,\alpha,\beta,\ldots)$, and
$\vec{v}=\vec{u}+\vec{\rho}$, 
$\vec{\rho}=(0,\phi+1,0,\phi,\ldots)$. 
For $n=4l$ and $\sigma_1, \sigma_2$ given by 
(\ref{ssigma1}) or (\ref{ssigma2}), such
$\vec{x}$ satisfies~(\ref{ppx}) and, furthermore, we have
$(P_{\sigma_1} - P_{\sigma_2}) \vec{u}= 0$ and
$\vec{\rho}{\phantom{\,}}' \equiv 
(P_{\sigma_1} - P_{\sigma_2}) \vec{\rho}= (-1,0,1,0,\ldots)$.
Thus, for $A, B$ given by (\ref{AB1}), we have
$\vec{w} = A \vec{u} + B \vec{v} = B \vec{\rho}
= (I_n + P_{\sigma_1}) \vec{\rho}{\phantom{\,}}'$.
For the either choice of $\sigma_1$, all the components
of $\vec{w}$ are odd integers and so the hypotheses
of Theorem~\ref{DDSS} are satisfied.
\end{proof}

\begin{proof}[\bf Proof of Proposition~\ref{VVn4}]
For $D_1$, $D_2$ given by (\ref{D4ns}), we have 
$D_1 \bar{D}_1 D_1^{\sigma_1} \bar{D}_1^{\sigma_1}= D_0$,
$D_2 \bar{D}_2 D_1^{\sigma_2} \bar{D}_1^{\sigma_2}= 
  P_{\sigma_2} D_0$,
$D_1 \bar{D}_1 D_2^{\sigma_1} \bar{D}_2^{\sigma_1}= 
 P_{\sigma_1} D_0$, and
$D_2 \bar{D}_2 D_2^{\sigma_2} \bar{D}_2^{\sigma_2}=  
 P_{\sigma_1^{-1}} D_0$, where
$D_0 \equiv |z_1|^2 \,\mathrm{diag}_4
(|z_3|^2,|z_2|^2,|z_2|^2,|z_3|^2)$.
Since $(I_n+P_{\sigma_2}) D_0 =
(P_{\sigma_1}+P_{\sigma_1^{-1}}) D_0 = Q^{-2} I_n$,
where $Q$ is given by~(\ref{Q4ns}), we conclude that
equations (\ref{Deq1})--(\ref{Deq2}) hold.
Since $\sigma_1, \sigma_2$  satisfy (\ref{zzeq0}), it suffices 
to verify~(\ref{dddd}). A direct computation yields
$D_1^{\sigma_2} \bar{D}_1^{\sigma_2 \circ \sigma_1} 
  D_2^{\sigma_1 \circ \sigma_1} \bar{D}_1^{\sigma_1} =
|z_1|^2 {\rm diag}_4\,(z_2 \bar{z}_3, \zeta \bar{z}_2 \bar{z}_3,
  \bar{z}_2 z_3,-\zeta \bar{z}_2 \bar{z}_3)=
 - D_2^{\sigma_2} \bar{D}_1^{\sigma_2 \circ \sigma_2} 
  D_2^{\sigma_1 \circ \sigma_2} \bar{D}_2^{\sigma_1}$,
so that (\ref{dddd}) holds.

It remains to note that (\ref{trdd12a}) holds thanks
to the condition (\ref{zzet2}) whereas (\ref{trdd12b})
is equivalent to the condition 
$\sum_{k=1}^{2l} (D_1 \bar{D}_2)_{2k,2k} = 0$ which
also holds as seen from~(\ref{D4ns}).
\end{proof} 

\vspace*{1mm}
\small{
{\bf Acknowledgements.} 
The author is grateful to P. Kulish for useful discussions.
This work was supported in part by the grant MODFLAT of the 
European Research Council (ERC) and by the NCCR SwissMAP 
of the Swiss National Science Foundation,
 and in part by the Russian Fund for Basic Research 
grants  14-01-00341 and 13-01-12405-ofi-m. 
}

\end{document}